\numberwithin{equation}{section}
\numberwithin{figure}{section}
\numberwithin{table}{section}
\theoremstyle{plain}
\newtheorem{Thm}{Theorem}
\newtheorem{Prop}{Proposition}
\theoremstyle{definition}
\newtheorem{example}{Example}
\begin{document}

\title[Analysis of an Epidemic Model]{Geometric Singular Perturbation Theory Analysis of an Epidemic Model with Spontaneous Human Behavioral Change}

\author[Schecter]{Stephen Schecter}
\address{
Department of Mathematics \\
North Carolina State University \\
Box 8205 \\
Raleigh, NC 27695 USA 
}
\email{schecter@ncsu.edu}
\date{June 21, 2020}

\begin{abstract}
We consider a model due to Piero Poletti and collaborators that adds spontaneous human behavioral change to the standard SIR epidemic model.   In its simplest form, the Poletti model adds one differential equation, motivated by evolutionary game theory, to the SIR model.  The new equation describes the evolution of a variable $x$ that represents the fraction of the population using normal behavior.  The remaining fraction $1-x$ uses altered behavior such as staying home, social isolation, mask wearing, etc.  Normal behavior offers a higher payoff when the number of infectives is low; altered behavior offers a higher payoff when the number is high.  We show that the entry-exit function of geometric singular perturbation theory can be used to analyze the model in the limit in which behavior changes on a much faster time scale than that of the epidemic.  In particular, behavior does not change as soon as a different behavior has a higher payoff; current behavior is sticky.  The delay until behavior changes in predicted by the entry-exit function.
\end{abstract}

\subjclass{92D30, 34E15, 91A22}

\keywords{epidemiological modeling, entry-exit function, geometric singular perturbation theory, imitation dynamics, evolutionary game theory}

\thanks{I thank Dan Marchesin and Marlon Michael Lopes Flores of IMPA for introducing me to epidemiological modeling. Their work on epidemiology is supported by FAPERJ and Instituto Serrapilheira.}

\maketitle

\section{Introduction}

A disease epidemic in a human population, such as measles, influenza, or covid-19, spreads due to a combination of pathogen characteristics and human behavior.  Pathogen characteristics determine the circumstances under which an infected person can readily infect another.  Human behavior determines how frequently those circumstances occur.

Baseline human behavior varies with the society.  In a city, crowded conditions in housing, public transportation, schools and workplaces may lead to frequent close human interactions; in a rural area this may be less true.  In East Asia mask-wearing in public is fairly common in normal circumstances; in other parts of the world it is rare.  

During an epidemic, human behavior may change due to government policies closing schools and businesses, requiring people to stay at home, and encouraging social distancing and mask-wearing.

Spontaneous changes in human behavior also affect the course of an epidemic.  People may react to an epidemic, or to information presented to them, by spontaneously reducing social contacts, staying home to the extent possible, adopting more stringent hygiene or social distancing, or wearing a mask.  People may adopt these behaviors independent of government policies; and, to the extent that they feel motivated to adopt such behaviors, they are more likely to comply with government orders and encouragement to do so.

Similarly, when an epidemic wanes, or when people are presented with information that an epidemic is waning or that the disease is less dangerous than originally feared, people may spontaneously return to normal behavior.  If restrictive government policies are still in place, compliance may decline.

In the simplest epidemic models, SIR models, the transmissibility of a disease in captured in a single parameter, $\beta$, defined as the number of ``adequate contacts" per unit time that an infected person has with other people \cite{hethcote09}.  If these other people are susceptible to the disease (not immune due to previous infection and not currently infected), an adequate contact results in a new infected individual.  The basic reproduction number of the disease, $R_0$, is $\beta$ times the typical length of time that an infected person remains infective.  If $R_0>1$, then initially, when the susceptible fraction of the population is close to 1, the number of infected individuals will grow.

Epidemic control measures aim to reduce $\beta$ by enforcing or encouraging changes in behavior.  To determine what measures to institute, governments rely on epidemic models that estimate $\beta$ under normal circumstances and under various restrictive policies.

A weakness of all epidemic models in current use, as far as I know, is that they ignore spontaneous behavioral change.  For example, the Imperial College covid-19 model \cite{imp-covid1}, which influenced the United Kingdom and United States government to institute social distancing measures \cite{wpost1}, was based on a very detailed 2006 influenza epidemic model by the same group \cite{ferguson06}.  According to the 2006 paper, 
``We do not assume any spontaneous change in the behaviour of uninfected individuals as the pandemic progresses, but note that behavioural changes that increased social distance together with some school and workplace closure occurred in past pandemics \ldots and might be likely to occur in a future pandemic even if not part of official policy. \ldots Such spontaneous changes in population behaviour might more easily reduce peak daily case incidence." 

Epidemiologists appear to be well aware that spontaneous behavioral change should be incorporated in models.  There is a fairly extensive literature on ways to do it; a review article is \cite{verelst16}.  There does not appear to be agreement on what modeling approach is best.  This  probably should not be regarded as a serious problem; a variety of different models are commonly used in epidemiology.  A more serious issue is that there has been little work on how to determine the values of the parameters in the models \cite{verelst16}.  Without approximate values for the parameters, models of spontaneous behavioral change can only yield qualitative predictions.

The goal of this paper is not to deal with the various issues of how best to account for behavioral change in epidemic models.  Instead we want to direct attention to a particular approach to behavioral change, due to  Piero Poletti and collaborators, in its simplest form \cite{poletti09, poletti10, poletti12}.  This model adds one equation, motivated by evolutionary game theory \cite{nowak-sigmund04, hofbauer-sigmund03}, to the standard SIR model.  Our goal is to show how the entry-exit function \cite{demaesschalck08} of geometric singular perturbation theory \cite{jones94, kuehn15} can be used to analyze this model.  Given values for the parameters, the entry-exit function enables one to make precise predictions in the limit where behavioral change occurs on a much faster time scale than the epidemic itself. 

To my knowledge, there have been two previous uses of the entry-exit function in epidemiological models, \cite{liu16} and \cite{kuehn20}.

Figure \ref{fig:poletti} shows a typical simulation of the Poletti model. There are three variables.  Two, $S$ and $I$, are the familiar susceptible and infective population fractions from the SIR model.  The third variable, $x$, represents the fraction of the population using normal behavior.  When $x=1$, in this simulation, the model reduces to an SIR model with $R_0 = 3$.  When $x=0$, the entire population uses altered behavior; in this simulation, the model reduces to an SIR model with $R_0 = .6$.  In the simulation, behavior changes on a time scale 200 times faster than that of the epidemic itself.  Thus, if the time scale for the epidemic is days, 1000 time units in the simulation equals five days.  The simulation shows 20,000 fast time units, or 100 days.  

\begin{figure}[htb]
\includegraphics[width=4in]{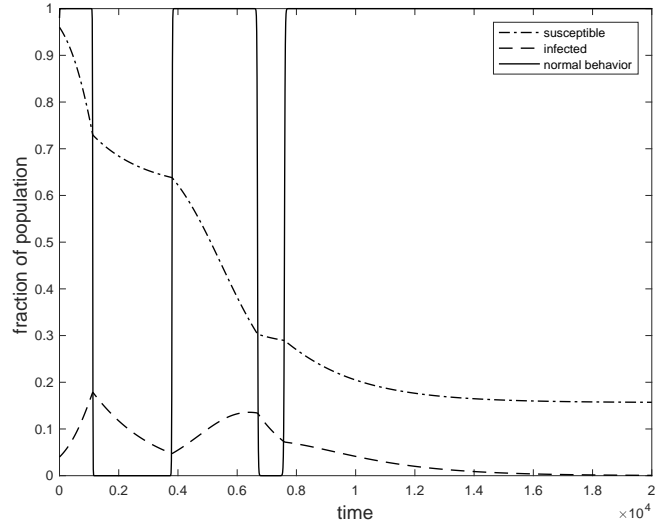}
\caption{A simulation of the Poletti model.  At the start $(S,I,x)=(.96,.04,.98)$.  Since $I<1$, almost all the population quickly adopts normal behavior.  After $I$ rises to about .18 (showing behavior stickiness), the population switches to altered behavior.  $I$ falls to about .05 (again showing behavior stickiness); the population returns to normal behavior; and $I$ rises to about .13 (second wave).  After two more behavioral switches, the epidemic dies out.}
\label{fig:poletti}
\end{figure}

Altered behavior yields a negative payoff due to loss of income, loss of social interactions, and so on.  However, altered behavior reduces the chance of getting the disease.  In this simulation, normal behavior yields a higher payoff to the individual when  $I< .1$.  When $I>.1$, altered behavior yields a higher payoff.  There is therefore a tendency to adopt altered behavior, which moderates the epidemic, when $I$ passes $.1$.  When $I$ falls below $.1$, there is a tendency to resume normal behavior.  Resuming normal behavior can result in a ``second wave" of infections, as seen in the simulation.  

In the Poletti model, behavior changes because of encounters with other people whose behavior offers a higher payoff than one's own.  Thus, in the simulation, behavior does not change immediately when $I$ passes .1; the current behavior is ``sticky."  The delay until behavior changes can be calculated in the limit from the entry-exit function.

The rather fast evolution of the epidemic in the simulation is due to the choices $R_0 = 3$ and $R_0 = .6$.

The Poletti model, in my view, plays a role similar to the SIR model: it gives the essence of the situation, stripped of complications, and can form the basis for more realistic models.  I expect that geometric singular perturbation theory will also prove useful in analyzing more realistic extensions of the model.

In the next few sections of the paper we review the SIR model (section \ref{sec:SIR}), introduce the Poletti model (section \ref{sec:Pol}), and describe and exploit the model's slow-fast structure (section \ref{sec:slow-fast}). The main result of the paper, Theorem \ref{th:smalleps}, is stated at the end of section \ref{sec:slow-fast}.  We then provide examples (section \ref{sec:ex}) and proofs (section \ref{sec:proofs}), and conclude with a brief discussion (section \ref{sec:disc}).

\section{The SIR model\label{sec:SIR}}

The Poletti model is based on the standard SIR model for an epidemic,
\begin{align}
\dot S &= -\beta SI, \label{SIR1} \\
\dot I &= \beta SI  - \gamma I, \label{SIR2} \\
\dot R &= \gamma I \label{SIR3},
\end{align}
with $\dot{\;} = \frac{d\;}{dt}$.  The variables $S$, $I$, and $R$ are population fractions; they sum to 1.  (Since $\dot S+\dot I+\dot R=0$, the sum $S+I+R$ is constant.) $S$ is the fraction of the population that is susceptible to acquiring the disease; $I$ is the fraction that is currently infected; $R$ is the fraction that has recovered.  ($R$ is sometimes called the fraction removed. If there are deaths due to the disease, they are included in $R$ without change to the model.)    Thus the equation for $R$ can be ignored; $R$ can be recovered from $R=1-S-I$.  The system reduces to
\begin{align}
\dot S &= -\beta SI, \label{SIRr1} \\
\dot I &= \beta SI  - \gamma I \label{SIRr2}
\end{align}
on the triangle $T = \{(S,I) : S\ge0, \;  I\ge0, \; S+I\le1\}$.

Let $\hat{T}= \{(S,I) \in T : S>0 \mbox{ and } I>0$.  In  $\hat{T}$ the orbits of \eqref{SIRr1}--\eqref{SIRr2} satisfy the differential equation $\frac{dI}{dS}=-1+\frac{\gamma}{\beta S}$, so they are curves 
\begin{equation}
\label{SIRorbit}
I+S-\frac{\gamma}{\beta}\ln S=C.
\end{equation}

The parameter $\beta$ was discussed in the introduction.  The average length of time an individual is infected is $\frac{1}{\gamma}$.   Thus the basic reproduction number of the disease $R_0$ mentioned in the introduction is $\frac{\beta}{\gamma}$.

Phase portraits on $T$ in the cases $0<R_0<1$ and $R_0>1$ are shown in Figure \ref{fig:SIR}.  In both cases the system has the line segment of equilibria $I=0$, $0\le S \le1$.  Each solution approaches one of the equilibria $(S_f,0)$.  In other words, when the epidemic ends, no one is infected, and $R=1-S_f$ is the fraction of the population that contracted the disease in the course of the epidemic.  

In $\hat{T}$, $\dot S<0$, so the number of susceptibles steadily falls.  If $0<R_0<1$, $\dot I<0$ in $\hat{T}$ as well, so the number of infectives also steadily falls.  If $R_0>1$, $\dot I<0$ (resp. $\dot I>0$) for $0<S<\frac{\gamma}{\beta}$ (resp. $\frac{\gamma}{\beta}<S<1)$.  Thus if a solution starts with $\frac{\gamma}{\beta}<S<1$, then $I$ increases until $S$ has fallen to $\frac{\beta}{\gamma}$; after that $I$ decreases.

\begin{figure}[htb]
\centering
\subfigure[]
{
\label{fig:SIR:a}
\includegraphics[width=3in]{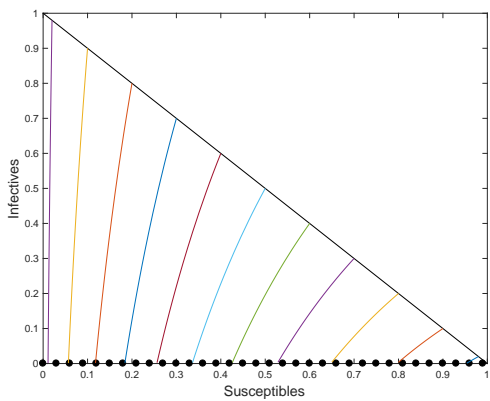}
}
\hspace{.0001in}
\subfigure[]
{
\label{fig:SIR:b}
\includegraphics[width=3in]{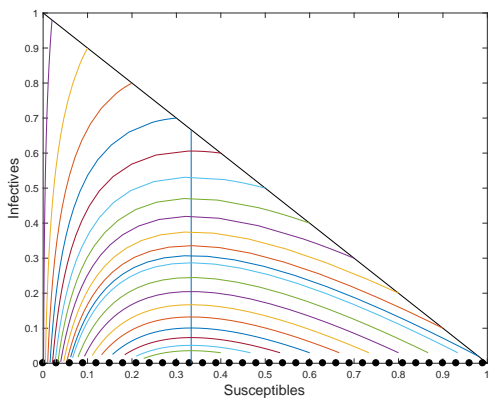}
}
\caption{Phase portraits of SIR models. (a) $\beta = 1/10$, $\gamma=1/6$, so $R_0=6/10$.  (b) $\beta = 1/2$, $\gamma=1/6$, so $R_0=3$. In case (b), the vertical line $S = \gamma/\beta = 1/3$ at which solutions attain their maximum value of $I$ is also shown.  All solutions move to the left as time increases.}
\label{fig:SIR}
\end{figure}

\section{The Poletti Model\label{sec:Pol}}

In the Poletti model, susceptible individuals have two available behaviors, normal, for which $\beta=\beta_n$ with basic reproduction number $R_0=\frac{\beta_n}{\gamma}>1$, and altered, for which $\beta=\beta_a$ with basic reproduction number $R_0=\frac{\beta_a}{\gamma}<1$.  Altered behavior may include staying home to the extent possible, practicing social distancing, mask wearing, etc.

Each behavior has a payoff to a susceptible who adopts it.  The payoffs are
$$
p_n = -m_n I \mbox{ and } p_a = -k-m_a I
$$
with $m_n$, $m_a$ and $k$ positive and $m_n>m_a$.  The negative payoff $-m_n I$ is due to the possibility that a susceptible with normal behavior will contract the disease; it is proportional to $I$, the fraction of infectives in the population.  The negative payoff $-m_a I$ is due to the possibility that a susceptible with altered behavior will contract the disease; it is also proportional to $I$, but the proportionality constant is less negative.  In addition, altered behavior has a negative payoff $-k$ independent of $I$ that represents loss of income, loss of valued social interactions, etc.  The payoff from altered behavior is higher if and only if $I>\frac{k}{m_n-m_a}$, i.e., if and only if the fraction of infectives in the population is sufficiently high.  We assume
$$
\frac{k}{m_n-m_a}<1.
$$
This assumption allows altered behavior to sometimes have a higher payoff.

Susceptibles are assumed to change their behavior from normal to altered, or vice-versa, due to imitation of other susceptibles they encounter who are using the opposite behavior and experiencing a higher payoff.  The mathematical formulation of this notion is called imitation dynamics and comes from evolutionary game theory \cite{hofbauer-sigmund03}. 

Let $x$ denote the fraction of the susceptibles using normal behavior, so that $1-x$ is the fraction using altered behavior.  We continue to let $S$ and $I$ denote the susceptible and infected fractions of the population. Then the complete model is 
\begin{align}
\dot S &= -\big(\beta_nx+\beta_a(1-x)\big)SI, \label{P1} \\
\dot I &= \big(\beta_nx+\beta_a(1-x)\big)SI - \gamma I, \label{P2} \\
\dot x &=x(1-x)(\beta_a-\beta_n)I+\frac{1}{\epsilon}x(1-x)\big(k-(m_n-m_a)I\big), \label{P3} 
\end{align}
with $\dot{\;} = \frac{d\;}{dt}$.  The state space is the prism 
$$
P=\{(S,I,x) : S\ge0, \;  I\ge0, \; S+I\le1, \; 0\le x \le 1\}. 
$$
There is also an equation for the recovered fraction of the population $R$, $\dot R = \gamma I$; we ignore it since $R$ can be recovered from $S=1-R-I$.

For the derivation of the model, see \cite{poletti09, poletti10}.  It can be intuitively understood as follows.  

The equations for $\dot S$ and $\dot I$ come from assuming that both susceptibles with normal behavior and susceptibles with altered behavior satisfy SIR models.  

The first summand in the equation for $\dot x$ is negative; it expresses the fact that susceptibles with normal behavior acquire the disease more easily than susceptibles with altered behavior, and hence more readily leave the susceptible group.  Thus the fraction of susceptibles using normal behavior tends to decrease.  

The second summand represents the the rate of change of $x$ due to imitiation dynamics.  The rate at which susceptibles using different behaviors encounter each other is proportional to $x(1-x)$.  The difference in payoffs of the two behaviors, given the current level of $I$, is
$$
p_n-p_a = k - (m_n-m_a)I.
$$
When this number is positive, normal behavior yields a larger payoff, so $x$ increases at a rate proportional to the difference between the payoffs; when this number is negative, $x$ decreases in the same manner.

The rate constant multiplying this summand is written as $\frac{1}{\epsilon}$ with $\epsilon>0$.  We will assume that this constant is large, so that $\epsilon$ is small.  In other words, we assume that behavior can change on a much faster time scale than that of the epidemic itself.

\section{Slow-fast structure\label{sec:slow-fast}}

\subsection{Slow-fast structure} System \eqref{P1}--\eqref{P3}, in which we recall that $\dot{\;} = \frac{d\;}{dt}$, is a slow-fast system \cite{jones94, kuehn15} with two slow variables, $S$ and $I$, and one fast variable, $x$; $t$ is the slow time.  Such systems are more commonly written with the last equation multiplied on both sides by $\epsilon$:
\begin{align}
\dot S &= -\big(\beta_nx+\beta_a(1-x)\big)SI, \label{Ps1} \\
\dot I &= \big(\beta_nx+\beta_a(1-x)\big)SI - \gamma I,\label{Ps2}  \\
\epsilon\dot x &=\epsilon x(1-x)(\beta_a-\beta_n)I+x(1-x)\big(k-(m_n-m_a)I\big). \label{Ps3} 
\end{align}

The fast time $\tau$ satisfies $t=\epsilon \tau$.  With  ${\;} ^\prime= \frac{d\;}{d\tau}$, system \eqref{Ps1}--\eqref{Ps3} becomes
\begin{align}
S^\prime &= -\epsilon\big(\beta_nx+\beta_a(1-x)\big)SI, \label{Pf1} \\
I^\prime &= \epsilon\big(\beta_nx+\beta_a(1-x)\big)SI - \epsilon\gamma I, \label{Pf2} \\
x^\prime &=\epsilon x(1-x)(\beta_a-\beta_n)I+x(1-x)\big(k-(m_n-m_a)I\big). \label{Pf3} 
\end{align}

The slow system \eqref{Ps1}--\eqref{Ps3}  and the fast system \eqref{Pf1}--\eqref{Pf3}  have the same phase portraits for $\epsilon>0$, but they have different limits at $\epsilon=0$.  For $\epsilon=0$, the slow system \eqref{Ps1}--\eqref{Ps3}  becomes the slow limit system
\begin{align}
\dot S &= -\big(\beta_nx+\beta_a(1-x)\big)SI, \label{Ps01} \\
\dot I &= \big(\beta_nx+\beta_a(1-x)\big)SI - \gamma I, \label{Ps02} \\
0 &=x(1-x)\big(k-(m_n-m_a)I\big), \label{Ps03} 
\end{align}
and the fast system \eqref{Pf1}--\eqref{Pf3}  becomes the fast limit system
\begin{align}
S^\prime &= 0, \label{Pf01} \\
I^\prime &= 0, \label{Pf02} \\
x^\prime &=x(1-x)\big(k-(m_n-m_a)I\big). \label{Pf03} 
\end{align}

Singular solutions are constructed by combining solutions of the slow limit system \eqref{Ps01}--\eqref{Ps03}  and the fast limit system \eqref{Pf01}--\eqref{Pf03}.  In many situations, solutions for small $\epsilon>0$ are close to singular solutions. 

\subsection{Fast limit system} For the fast limit system \eqref{Pf01}--\eqref{Pf03}, each line segment $(S,I)=(S_0,I_0)$ is invariant, and the triangles $x=0$ and $x=1$ consist of equilibria.  (The plane $I=\frac{k}{m_n-m_a}$ also consists of equilibria, but we will not make direct use of them.)  On line segments $(S,I)=(S_0,I_0)$ with $0\le I_0<\frac{k}{m_n-m_a}$, $\dot x>0$, so the solution $x(t)$ of  \eqref{Pf03} satisfies $\lim_{t\to-\infty}x(t)=0$ and $\lim_{t\to\infty}x(t)=1$.  On line segments $(S,I)=(S_0,I_0)$ with $\frac{k}{m_n-m_a}<I_0\le 1$, $\dot x<0$, so the solution $x(t)$ of \eqref{Pf03} satisfies $\lim_{t\to-\infty}x(t)=1$ and $\lim_{t\to\infty}x(t)=0$.  The fast dynamics just reflect the fact that normal behavior gives a higher payoff if $I<\frac{k}{m_n-m_a}$, and altered behavior gives a higher payoff if $I>\frac{k}{m_n-m_a}$.

Equilibria of the fast limit system \eqref{Pf01}--\eqref{Pf03} are normally attracting if $\frac{\partial \dot x}{\partial x}<0$ and normally repelling if $\frac{\partial \dot x}{\partial x}>0$. One can check that equilibria with $x=0$ are normally repelling for $I<\frac{k}{m_n-m_a}$ and normally attracting for $I>\frac{k}{m_n-m_a}$.   Equilibria with $x=1$ are the reverse.

\subsection{Slow limit system} The slow limit system \eqref{Ps01}--\eqref{Ps03} makes sense on the triangles $x=0$ and $x=1$.  

On the triangle $x=0$, the slow limit system reduces to
\begin{align}
\dot S &= -\beta_aSI,\label{Ps001} \\
\dot I &= \beta_aSI - \gamma I. \label{Ps002}
\end{align}
This is just an $SIR$ model with $\beta=\beta_a$ and basic transmission number $R_0<1$.  

Similarly, on the triangle
$x=1$, the slow limit system reduces to
\begin{align}
\dot S &= -\beta_nSI, \label{Ps011} \\
\dot I &= \beta_nSI - \gamma I. \label{Ps012}
\end{align}
This is just an $SIR$ model with $\beta=\beta_n$ and basic transmission number $R_0>1$. 

For $\epsilon>0$, the triangles $x=0$ and $x=1$ remain invariant.  The slow system \eqref{Ps1}--\eqref{Ps3}, restricted to $x=0$, remains \eqref{Ps001}-- \eqref{Ps002}.  Restricted to $x=1$ it remains \eqref{Ps011}-- \eqref{Ps012}. Thus the line segments $\{(S,I,x):0\le S\le1, \; I=0, x=0\}$ and $\{(S,I,x):0\le S\le1, \; I=0, x=1\}$ remain equilibria.

We will use the following notation where convenient:
\begin{itemize}
\item $\phi^\epsilon\big((S_0,I_0,x_0),t\big) = $ solution of \eqref{Pf1}--\eqref{Pf3} with $\phi^\epsilon\big((S_0,I_0,x_0),0\big) = (S_0,I_0,x_0)$.
\item  $\psi_0\big((S_0,I_0),t\big) = $ solution of \eqref{Ps001}--\eqref{Ps002} with $\psi_0\big((S_0,I_0),0\big) = (S_0,I_0)$.
\item  $\psi_1\big((S_0,I_0),t\big) = $ solution of \eqref{Ps011}--\eqref{Ps012} with $\psi_1\big((S_0,I_0),0\big) = (S_0,I_0)$.
\end{itemize}

\subsection{Entry-exit function for the triangle $x=0$\label{sec:ee0}}  

In the triangle $x=0$, let $(S_0,I_0) \in  \hat{T}$ with $I_0>\frac{k}{m_n-m_a}$, so  $(S_0,I_0)$ lies in the attracting portion of the triangle.  Let $(S(t),I(t))=\psi_0\big((S_0,I_0),t\big)$, let $t_1>0$, and let $(S_1,I_1)=\big(S(t_1),I(t_1)\big)$.  The solution $\big(S(t),I(t)\big)$ traces out a curve $\Gamma$, which from \eqref{SIRorbit} has the equation
\begin{equation}
\label{0orbit}
I+S-\frac{\gamma}{\beta_a}\ln S=v_0, \quad v_0=I_0+S_0-\frac{\gamma}{\beta_a}\ln S_0 = I_1+S_1-\frac{\gamma}{\beta_a}\ln S_1.
\end{equation}

We define the entry-exit integral 
\begin{multline}
\label{I0}
\mathcal{I}_0\big((S_0,I_0),(S_1,I_1)\big) =  \int_0^{t_1} k-(m_n-m_a)I(t) \; dt 
\\ =
\int_{S_0}^{S_1} -\frac{k-(m_n-m_a)(v_0-S+\frac{\gamma}{\beta_a}\ln S)}{\beta_aS(v_0-S+\frac{\gamma}{\beta_a}\ln S)}  \; dS.
\end{multline}
The second integral follows from the first by making the substitutions $S=S(t), \; dS = -\beta_a S(t)I(t) \; dt$, and $I=v_0-S+\frac{\gamma}{\beta_a}\ln S$, which follows from \eqref{0orbit}.  It cannot be evaluated analytically, but is readily evaluated numerically.

The integrand of the first integral is negative when $I>\frac{k}{m_n-m_a}$ and positive when $I<\frac{k}{m_n-m_a}$.  The integral represents accumulated attraction to (resp. repulsion from) the plane $x=0$ where the integrand is negative (resp. positive).  

\begin{Prop}
\label{P:exactly}
For each point $(S_0,I_0)$ in $\hat{T}$ with $I_0>\frac{k}{m_n-m_a}$, there is exactly one $t_1>0$ such that  $(S_1,I_1)=(S(t_1),I(t_1))$ satisfies \newline $\mathcal{I}_0\big((S_0,I_0),(S_1,I_1)\big)=0$.
\end{Prop}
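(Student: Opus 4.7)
The plan is to convert the problem into the study of a single real-valued function of $t$, defined by the first integral in \eqref{I0}, and to use the monotonicity of $I(t)$ in the SIR system on $x=0$ (which has $R_0 = \beta_a/\gamma < 1$).

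First, let $\big(S(t),I(t)\big)=\psi_0\big((S_0,I_0),t\big)$ and define
\[
F(t)=\int_0^{t} k-(m_n-m_a)I(s)\,ds,
\]
so that $F(0)=0$ and $F'(t)=k-(m_n-m_a)I(t)$.  The statement to be proved becomes: there is a unique $t_1>0$ with $F(t_1)=0$.

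Next I would establish the behavior of $I(t)$.  Since the orbit stays in $\hat T$, $S(t)\le 1<\gamma/\beta_a$ (because $\beta_a/\gamma<1$), hence $\dot I=(\beta_a S-\gamma)I<0$ on $[0,\infty)$, so $I(t)$ is strictly decreasing.  In fact $\dot I\le -(\gamma-\beta_a)I$, giving exponential decay and hence $I(t)\to 0$ as $t\to\infty$.  Because $I_0>k/(m_n-m_a)$ and $I(t)$ decreases strictly to $0$, there is a unique time $t^*>0$ at which $I(t^*)=k/(m_n-m_a)$.  Consequently $F'(t)<0$ on $[0,t^*)$ and $F'(t)>0$ on $(t^*,\infty)$, so $F$ is strictly decreasing on $[0,t^*]$ and strictly increasing on $[t^*,\infty)$.

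Combining these facts finishes the argument.  On $(0,t^*]$, $F(t)<F(0)=0$, so no zero of $F$ lies in $(0,t^*]$.  On $[t^*,\infty)$, $F$ is continuous and strictly increasing, with $F(t^*)<0$ and $F'(t)\to k>0$ as $t\to\infty$, which forces $F(t)\to+\infty$.  The intermediate value theorem, together with strict monotonicity, yields exactly one $t_1\in(t^*,\infty)$ with $F(t_1)=0$, and by the previous sentence this is the unique positive zero of $F$.

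The argument is really just a careful application of the sign chart of $F'$ and the intermediate value theorem; the only nontrivial ingredient is the qualitative behavior of the subcritical SIR system, namely that $I(t)$ decreases strictly and tends to $0$.  This is the only place where one could stumble, but it is standard and was essentially already noted in Section \ref{sec:SIR}.  I expect no serious obstacle.
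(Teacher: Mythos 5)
Your proof is correct and follows essentially the same route as the paper's: monotonicity of $I(t)$ in the subcritical SIR system gives a unique crossing time $t^*$, the sign chart of the integrand shows $F$ decreases then increases, and divergence of $F$ to $+\infty$ (which you get from $F'(t)\to k$, where the paper instead notes $\int_0^\infty I\,dt<\infty$ by exponential decay) yields the unique zero. No issues.
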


Of course, $(S_1,I_1)$ lies in the region $I<\frac{k}{m_n-m_a}$.  Intuitively, at $(S_1,I_1)$ the accumulated repulsion from the plane $x=0$ balances the accumulated attraction to the plane.   We shall see that for small $\epsilon>0$, a solution of \eqref{Pf1}--\eqref{Pf3} that enters a neighborhood of the plane $x=0$ near $(S_0,I_0)$ will track a solution of \eqref{Ps001}--\eqref{Ps002} near $(S(t),I(t))$ until it leaves the neighborhood near $(S_1,I_1)$.  See Figure \ref{fig:entryexit} and Subsection \ref{sec:ee02}.

\begin{figure}[htb]
\includegraphics[width=3in]{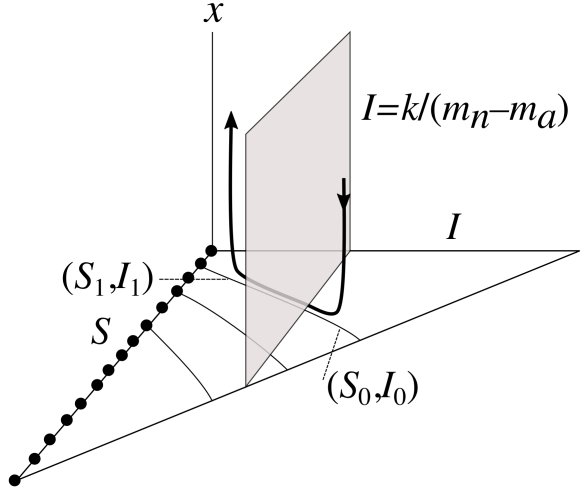}
\caption{A solution of \eqref{Pf1}--\eqref{Pf3} approaches the triangle $x=0$ near a point $(S_0,I_0)$, follows the solution of \eqref{Ps001}--\eqref{Ps002} through $(S_0,I_0)$ until $\mathcal{I}_0\big((S_0,I_0),(S_1,I_1)\big)=0$, then leaves the triangle.}
\label{fig:entryexit}
\end{figure}

\subsection{Entry-exit function for the triangle $x=1$\label{sec:ee1}} 

In the triangle $x=1$, let $(S_0,I_0) \in \hat{T}$ with $I_0<\frac{k}{m_n-m_a}$, so  $(S_0,I_0)$ lies in the attracting portion of the triangle.   Let $(S(t),I(t))=\psi_1\big((S_0,I_0),t\big)$, let $t_1>0$, and let $(S_1,I_1)=\big(S(t_1),I(t_1)\big)$.  The solution $\big(S(t),I(t)\big)$ traces out a curve $\Gamma$, which from \eqref{SIRorbit} has the equation
\begin{equation}
\label{0orbit}
I+S-\frac{\gamma}{\beta_n}\ln S=v_0, \quad v_0=I_0+S_0-\frac{\gamma}{\beta_n}\ln S_0 = I_1+S_1-\frac{\gamma}{\beta_n}\ln S_1.
\end{equation}

We define the entry-exit integral 
\begin{multline}
\label{I1}
\mathcal{I}_1\big((S_0,I_0),(S_1,I_1)\big) =  \int_0^{t_1} k-(m_n-m_a)I(t) \; dt 
\\ =
\int_{S_0}^{S_1} -\frac{k-(m_n-m_a)(v_0-S+\frac{\gamma}{\beta_n}\ln S)}{\beta_nS(v_0-S+\frac{\gamma}{\beta_n}\ln S)}  \; dS.
\end{multline}
The second integral follows from the first as in the previous subsection.

The integrand of the first integral  is negative when $I<\frac{k}{m_n-m_a}$ and positive when $I>\frac{k}{m_n-m_a}$.  The integral represents accumulated attraction to (resp. repulsion from) the plane $x=1$ where the integrand is negative (resp. positive).  

The system \eqref{Ps011}--\eqref{Ps012} on $T$ has a unique orbit  that is tangent to the line $I=\frac{k}{m_n-m_a}$.  The point of tangency is $(S^*,I^*)$, $I^*=\frac{k}{m_n-m_a}$.   Let  $\Gamma^*$ denote the part of this orbit with $S^* < S<1$.
Let $\Gamma^*$ have the equation $S=S_*(I)$, $0<I<\frac{k}{m_n-m_a}$.  Let $V_-=\{(S,I) \in \hat T : 0<S<S_*(I) \mbox{ and } I<\frac{k}{m_n-m_a}\}$, and let $V_+=\{(S,I) \in \hat T : S_*(I) \le S \mbox{ and } I<\frac{k}{m_n-m_a}\}$.  See Figure \ref{fig:V}.

\begin{figure}[htb]
\includegraphics[width=3in]{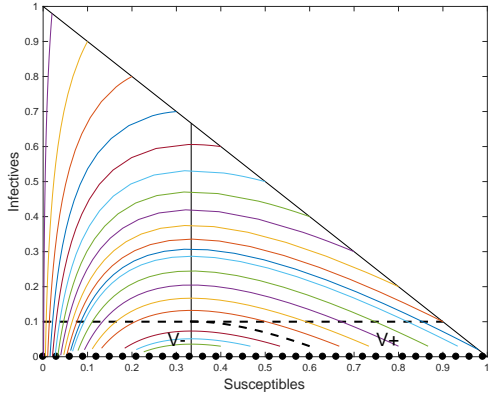}
\caption{Phase portrait of the fast limit system \eqref{Ps011}--\eqref{Ps012} in the triangle $x=1$ with $\beta_n = 1/2$ and $\gamma=1/6$.  The vertical line $S = \gamma/\beta = 1/3$ and the horizontal line $I=\frac{k}{m_n-m_a}=\frac{1}{10}$ are shown, as are the sets $V_-$ and $V_+$ bounded above by this line.  Solutions that start in $V_-$ approach equilibria without crossing the line $I=\frac{k}{m_n-m_a}$; solutions that start in $V_+$ cross the line.}
\label{fig:V}
\end{figure}

Let $(S_0,I_0) \in V_-$ and let $(S(t),I(t))=\psi_1\big((S_0,I_0),t\big)$.  Then \newline $\big(S(t),I(t)\big) \in V_-$ for all $t\ge0$.  Thus $\mathcal{I}_1\big((S_0,I_0),(S_1,I_1)\big)$ is never 0.  As $t\to\infty$, $\big(S(t),I(t)\big)$ approaches an equilibrium $(S_f,0)$ of \eqref{Ps011}--\eqref{Ps012}.  In this case,  for small $\epsilon>0$, a solution of \eqref{Pf1}--\eqref{Pf3}  that enters a neighborhood of the plane $x=1$ near $(S_0,I_0)$ will track a solution of \eqref{Ps011}--\eqref{Ps012} near $(S(t),I(t))$ and approach an equilibrium $(S_f^\epsilon,0,1)$ of \eqref{Pf1}--\eqref{Pf3} with $S_f^\epsilon$ near $S_f$.  See Subsection \ref{sec:lim}.

Let $(S_0,I_0) \in V_+$ and let $(S(t),I(t))=\psi_1\big((S_0,I_0),t\big)$.  Then  $\big(S(t),I(t)\big)$ enters the region $I\ge\frac{k}{m_n-m_a}$ at $t=t_{\rm{in}}>0$ and leaves that region at $t=t_{\rm{out}}\ge t=t_{\rm{in}}$.  

If $\int_0^{t_{\rm{out}}} k-(m_n-m_a)I(t) \; dt <0$, then there is no point $(S_1,I_1)$ where $\mathcal{I}_1\big((S_0,I_0),(S_1,I_1)\big)$.  As in the previous paragraph, let $(S_f,0) = \lim_{t\to\infty}\big(S(t),I(t)\big)$.  In this case also,  for small $\epsilon>0$, a solution of \eqref{Pf1}--\eqref{Pf3}  that enters a neighborhood of the plane $x=1$ near $(S_0,I_0)$ will track a solution of \eqref{Ps011}--\eqref{Ps012} near $(S(t),I(t))$ and approach an equilibrium $(S_f^\epsilon,0,1)$ of \eqref{Pf1}--\eqref{Pf3} with $S_f^\epsilon$ near $S_f$.

If $\int_0^{t_{\rm{out}}} k-(m_n-m_a)I(t) \; dt >0$, then there is a unique point $(S_1,I_1)$, with $I_1>\frac{k}{m_n-m_a}$,  where $\mathcal{I}_1\big((S_0,I_0),(S_1,I_1)\big)=0$.  For small $\epsilon>0$, a solution of \eqref{Pf1}--\eqref{Pf3} that enters a neighborhood of the plane $x=1$ near $(S_0,I_0)$ will track a solution of \eqref{Ps011}--\eqref{Ps012} near $(S(t),I(t))$ until it leaves the neighborhood near $(S_1,I_1)$.  See Subsection \ref{sec:ee12}.

\subsection{Singular orbits} Motivated by the previous subsections, we construct singular orbits of the system \eqref{Pf1}--\eqref{Pf3} (or equivalently \eqref{Ps1}--\eqref{Ps3}).  

Consider a starting point $(S_0,I_0,x_0)$ with $(S_0,I_0) \in \hat{T}$,  $I_0<\frac{k}{m_n-m_a}$, and $0<x_0<1$.  At this point, $\dot x>0$.

1. The first orbit in $\mathcal{S}$ is a fast orbit: the portion of the line $(S,I)=(S_0,I_0)$ with $x_0\le x <1$.

2. To describe the next orbit in $\mathcal{S}$, there are three cases.  Let $(S(t),I(t))=\psi_1\big((S_0,I_0),t\big)$, and,  given $t_1>0$, let $(S_1,I_1)=\big(S(t_1),I(t_1)\big)$.

2a. Suppose there exists $t_1>0$ such $\mathcal{I}_1\big((S_0,I_0),(S_1,I_1)\big)=0$ and $I_1>\frac{k}{m_n-m_a}$.  The next orbit of $\mathcal{S}$ is $\{(S(t),I(t)) : 0\le t \le t_1 \}$, a slow orbit. 

2b. Suppose there exists $t_1>0$ such that $\mathcal{I}_1\big((S_0,I_0),(S_1,I_1)\big)=0$, and $I_1=\frac{k}{m_n-m_a}$.  In this case the construction of the singular orbit fails.  (Notice $t_1 = t_{\rm{out}}$ from the previous subsection.)

2c. Suppose there is no $t_1>0$ such that $\mathcal{I}_1\big((S_0,I_0),(S_1,I_1)\big)=0$.  Then the next orbit of $\mathcal{S}$ is $\{(S(t),I(t)) : t\ge0 \}$, a slow orbit.  This orbit approaches an equilibrium of \eqref{Ps011}--\eqref{Ps012}.  The construction of $\mathcal{S}$ terminates.

3. We continue the construction of $\mathcal{S}$ in case 2a.  The next orbit in $\mathcal{S}$ is a fast orbit: the portion of the line $(S,I)=(S_1,I_1)$ with $0< x <1$.

4. Let $(S(t),I(t))=\psi_0\big((S_1,I_1),t\big)$.  By Proposition \ref{P:exactly} there is exactly one $t_1>0$ such that  $(S_2,I_2)=(S(t_1),I(t_1))$ satisfies $\mathcal{I}_0\big((S_1,I_1),(S_2,I_2)\big)=0$. We have $I_1<\frac{k}{m_n-m_a}$.  The next orbit of $\mathcal{S}$ is $\{(S(t),I(t)) : 0\le t \le t_1 \}$, a slow orbit. 

5. The next orbit in $\mathcal{S}$ is a fast orbit: the portion of the line $(S,I)=(S_2,I_2)$ with $0< x <1$.

We now continue the construction at step 2, setting $(S_0,I_0)$ equal to $(S_2,I_2)$.

Next we consider a starting point $(S_0,I_0,x_0)$ with $(S_0,I_0) \in \hat{T}$,  $I_0>\frac{k}{m_n-m_a}$, and $0<x_0<1$.  In this case the first orbit in $\mathcal{S}$ is again a fast orbit: the portion of the line $(S,I)=(S_0,I_0)$ with $0 < x \le x_0$.  We continue the construction of $\mathcal{S}$ at step 4 above, setting $(S_1,I_1)$ equal to $(S_0,I_0)$.  

In both cases the singular orbit $\mathcal{S}$ is an alternating sequence of fast and slow orbits, with the first orbit fast.  The slow orbits alternate between orbits in $x=0$ and orbits in $x=1$.  The last orbit is a slow orbit in $x=1$ that approaches an equilibrium, for which $I=0$.

\begin{Thm}
\label{th:smalleps}
Let $(S_0,I_0,x_0)$ satisfy $(S_0,I_0) \in \hat{T}$,  $I_0 \neq \frac{k}{m_n-m_a}$, and $0<x_0<1$.  Suppose the construction of the singular orbit $\mathcal{S}$ that starts at $(S_0,I_0,x_0)$ never fails at step 2 and terminates after a finite number of steps at $(S_f,0,1)$. 
Let $\Gamma^\epsilon$ denote the orbit of \eqref{Pf1}--\eqref{Pf3} that starts at  $(S_0,I_0,x_0)$. Then as $\epsilon \to 0$, $\Gamma^\epsilon \to \mathcal{S}$.  The terminal point $(S_f^\epsilon,0,1)$ of $\Gamma^\epsilon$ converges to $(S_f,0,1)$.
\end{Thm}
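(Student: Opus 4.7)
The strategy is to track $\Gamma^\epsilon$ piece by piece along the singular orbit $\mathcal{S}$, combining Fenichel's theorem \cite{jones94,kuehn15} on attracting portions of the critical manifolds $\{x=0\}$ and $\{x=1\}$ with the entry--exit estimate of \cite{demaesschalck08} at each departure from a now-repelling portion. Since $\mathcal{S}$, by hypothesis, consists of finitely many alternating fast and slow orbits and terminates at $(S_f,0,1)$, it suffices to prove uniform convergence on each fast piece, on each slow piece inside an attracting face, and a sharp exit estimate at each passage through the fold $I=\tfrac{k}{m_n-m_a}$.

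I would handle the initial fast segment first. Since $I_0\neq \tfrac{k}{m_n-m_a}$, the $\epsilon=0$ equation \eqref{Pf03} has $x'$ of definite sign near $x_0$, so $x(\tau)$ converges to $0$ or $1$ in bounded fast time, and a Gronwall estimate on the full fast system \eqref{Pf1}--\eqref{Pf3} brings $\Gamma^\epsilon$ into an $O(\epsilon)$ neighborhood of the endpoint on the appropriate face. On each slow piece, which begins on the attracting portion of $\{x=0\}$ or $\{x=1\}$ and may continue into the repelling portion, Fenichel's theorem produces a locally invariant slow manifold $M^\epsilon$ that is $O(\epsilon)$-close to the face, on which the reduced flow is $O(\epsilon)$-close to \eqref{Ps001}--\eqref{Ps002} or \eqref{Ps011}--\eqref{Ps012}; fast exponential attraction brings $\Gamma^\epsilon$ onto $M^\epsilon$, and regular perturbation then gives $O(\epsilon)$ tracking of the appropriate $\psi_j((S_0,I_0),t)$ up to the fold.

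Once the reduced trajectory crosses the fold into the repelling region, I invoke the entry--exit theorem: $\Gamma^\epsilon$ remains exponentially close to the face until it reaches a neighborhood of the unique point $(S_1,I_1)$ at which the accumulated integrand $\mathcal{I}_j((S_0,I_0),(S_1,I_1))$ vanishes, after which it is ejected by a fast orbit onto the opposite face. Proposition \ref{P:exactly} and the case analysis of Section \ref{sec:ee1} guarantee existence and uniqueness of $(S_1,I_1)$ on each transition, and the hypothesis that the construction never fails at step 2 is precisely the statement that $I_1\neq \tfrac{k}{m_n-m_a}$ at every exit, which provides the transversality needed to apply the entry--exit result and to glue the resulting fast jump to the next attracting segment. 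Composing these fast--slow--exit building blocks inductively along the finitely many pieces of $\mathcal{S}$, with each junction contributing only $o(1)$ error to the starting point of the next piece, yields $\Gamma^\epsilon\to\mathcal{S}$. For the final slow piece, which converges to $(S_f,0,1)$ without a further exit (either lying wholly in $V_-$ or having $\mathcal{I}_1<0$ throughout), $S_f^\epsilon\to S_f$ follows because the line $\{I=0,\;x=1\}$ persists as a line of equilibria for every $\epsilon>0$; Fenichel tracking on arbitrarily large bounded slow-time windows, combined with the global asymptotic stability of this equilibrium line inside the invariant face $\{x=1\}$, then carries $\Gamma^\epsilon$ all the way to its terminal equilibrium.

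The main obstacle is the entry--exit estimate itself: one must verify that the existing results of \cite{demaesschalck08,liu16,kuehn20} apply in this 3D system with two slow variables, and in particular that the ``entry'' can be the post-jump arrival from the opposite face rather than an asymptotic approach along the critical manifold, so that the predicted exit point is robust against the $o(1)$ shift between the actual entry and its singular location. A secondary delicacy is the semi-infinite final segment, which requires closeness on an unbounded slow-time interval; this is handled by combining Fenichel tracking on any large finite window with the exponential contraction of the $x=1$ face toward the equilibrium line $\{I=0,x=1\}$.
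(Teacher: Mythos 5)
Your proposal follows essentially the same route as the paper: decompose $\Gamma^\epsilon$ at sections near $x=\delta$ and $x=1-\delta$, treat the fast jumps by regular perturbation, handle each delayed passage near a face by De Maesschalck's entry--exit theorem, and treat the semi-infinite final segment by normally hyperbolic invariant manifold (Fenichel) theory, exactly as in the paper's Propositions \ref{P:lim1}--\ref{P:lim4}. The two concerns you flag are resolved as you suspect: Theorem \ref{th:ee} is stated for slow variables $c\in\mathbb{R}^n$ with $n\ge1$, so two slow variables cause no difficulty, and the terminal convergence $S_f^\epsilon\to S_f$ is obtained from persistence of the normally attracting set of equilibria in $\{x=1\}$.
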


Roughly speaking, the fast jumps between $x=0$ and $x=1$ occur because the predominant behavior among the susceptibles has become less rewarding than the alternative.  When normal behavior predominates ($x$ near 1), if the fraction of infectives becomes high, behavior may switch to the altered form ($x$ near 0).   When altered behavior predominates ($x$ near 0), the fraction of infectives becomes low, and behavior swiches to the normal form ($x$ near 1).

However, the switch does not occur immediately when the number of infectives crosses the threshhold value $I=\frac{k}{m_n-m_a}$.  As was mentioned in the introduction, behavior changes because of encounters with other people whose behavior offers a higher payoff than one's own, so the current behavior is ``sticky."  The delay until behavior changes can be calculated in the limit from the entry-exit function.

\section{Examples\label{sec:ex}}

We consider the system \eqref{Ps1}--\eqref{Ps3} with the parameter values
$$
\beta_n=1/2, \; \beta_a=1/10,  \; \gamma=1/6,  \; k=3/10,  \; m_n=5,  \; m_a=2.
$$
From the values of $\beta_n$, $\beta_a$, and $\gamma$, we see $R_0=3$ for normal behavior and .6 for altered behavior.  The phase portrait of \eqref{Ps001}--\eqref{Ps002} in the triangle $x=0$ (resp. \eqref{Ps011}--\eqref{Ps012} in the triangle $x=1$) is given by Figure \ref{fig:SIR:a} (resp. Figure \ref{fig:SIR:b}).  The plane $I=\frac{k}{m_n-m_a}$ is $I=1/10$.

We shall consider singular orbits that start at $P_{\rm{start}}=(S_0,I_0,.98)$ with $I_0<1/10$.  Such a singular orbit starts with a fast solution from $P_{\rm{start}}$ to $(S_0,I_0,1)$.  One possibility is that the singular orbit immediately ends with an orbit of \eqref{Ps011}--\eqref{Ps012} from $(S_0,I_0,1)$ to a point $P_{\rm{end}}=(S_f,0,1)$; we would represent such a singular orbit by the sequence $(P_{\rm{start}}, P_{\rm{end}})$.  Otherwise we represent the singular orbit by a sequence 
$$
(P_{\rm{start}}, P_1, P_2, \ldots, P_{2k},P_{\rm{end}}), 
$$
where
\begin{itemize}
\item the first fast orbit goes from $P_{\rm{start}}=(S_0,I_0,.98)$ to  $(S_0,I_0,1)$;
\item the first slow orbit goes from $(S_0,I_0,1)$ to $P_1=(S_1,I_1,1)$;
\item the second fast orbit goes from $P_1=(S_1,I_1,1)$ to $(S_1,I_1,0)$;
\item the second slow orbit goes from $(S_1,I_1,0)$ to $P_2=(S_2,I_2,0)$ (unless it's the last slow orbit, see below);
\item the third fast orbit goes from $P_2=(S_2,I_2,0)$ to  $(S_2,I_2,1)$;
$$
\vdots
$$
\item the last fast orbit goes from $P_{2k}=(S_{2k},I_{2k},0)$ to  $(S_{2k},I_{2k},1)$;
\item the last slow orbit goes from $(S_{2k},I_{2k},1)$ to $P_{\rm{end}}=(S_f,0,1)$.
\end{itemize}
In other words, $P_1$, \ldots, $P_{2k}$ are the starting points of fast jumps;  $P_i$ with $i$ odd is in $x=1$, and  $P_i$ with $i$ even is in $x=0$.

Using the Matlab routines in the appendix, one can compute singular orbits for this system.  We give three examples.  Corresponding to each example we show the solution of the fast system \eqref{Pf1}--\eqref{Pf3} with the same starting point and $\epsilon = .005$, on the interval $0\le t\le 20,000$, computed using the Matlab ODE solver ode23s with the options RelTol=1e-10 and AbsTol=1e-11.  Because $\epsilon = .005$, 1000 units of fast time correspond to five units of slow time, i.e., five days.  To compare with the singular orbits, we give the value of $I$ at $x=1/2$ along each jump, and the value of $S$ at $t=30,000$.

\begin{example} A singular solution with two jumps.  
\begin{align*}
P_{\rm{start}} &= (.97,.03,.98)   \nonumber \\
P_1 &=  (.6713533014,  .2059798507,  1) \nonumber \\
P_2 &=  (.5714338970, .0373213930,  0) \nonumber \\
P_{\rm{end}} &=  (.1400580768,                                   0,     1) \nonumber 
\end{align*}
For the computed solution, jumps in $x$ occur successively at  $I=.20535$ and $I=.03735$; $S=.14017$ at $t=30,000$.  See Figure \ref{fig:ex1}.  Infections initially rise, then the epidemic is controlled by altered behavior for a while.  When the population switches back to normal behavior, infections rise for a while, then fall to zero.

\end{example}

\begin{example} A singular solution with four jumps.  This example was shown in the introduction.
\begin{align*}
P_{\rm{start}} &= (.96,.04,.98) \nonumber \\
P_1 &=  (.7197479246,  .1842413292,  1)  \nonumber \\
P_2 &=  (.6258761345,  .0451988482,  0)  \nonumber \\
P_3 &=  (.2763360357,  .1222273345,  1)  \nonumber \\
P_4 &=  (.2682106618,  .0806111708,  0) \nonumber \\
P_{\rm{end}} &=  (.1459222576,                                   0,     1)  \nonumber 
\end{align*}
For the computed solution, jumps in $x$ occur successively at  $I=.17815$, $I=.04756$, $I=.13366$, and $I=.07233$; $S=.15656$ at $t=30,000$.  In this example, when infections rise after the population switches back to normal behavior, the population again switches to altered behavior.  Eventually it switches back to normal behavior; this time there is no rise in infections, and infections fall to zero.
\end{example}

\begin{example} A singular solution with six jumps.
\begin{align*}
P_{\rm{start}} &= (.93,.07,.98) \nonumber \\
P_1 &=  ( .8251362461,  .1349850649,  1)  \nonumber \\
P_2 &=  ( .7667769297,  .0710900559,  0)  \nonumber \\
P_3 &=  ( .6515152002,  .1320533864,  1)  \nonumber \\
P_4 &=  ( .6155232547,  .0733319974,  0)  \nonumber \\
P_5 &=  ( .4804269385,  .1258291260,  1)  \nonumber \\
P_6 &=  ( .4615380487,  .0778669034,  0)  \nonumber \\
P_{\rm{end}} &=  ( .1387323862,                                   0,     1)  \nonumber 
\end{align*}
See Figure \ref{fig:ex1}.  For the computed solution, jumps in $x$ occur successively at  $I=.13931$, $I= .07344$, $I=.12876$, $I=.07547$, $I=.12329$, and $I=.07959$; $S=.13760$ at $t=30,000$.  In this example, the population switches to altered behavior three times after a rise in infections with normal behavior.  After the final episode of altered behavior, when the population switches back to normal behavior, infections rise and then fall to zero.
\end{example}

\begin{figure}[htb]
\centering
\subfigure[]
{
\label{fig:ex1:a}
\includegraphics[width=4in]{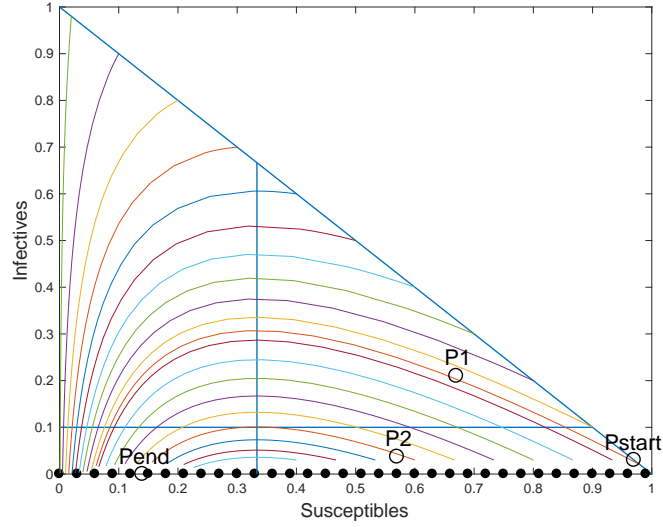}
}
\hspace{.0001in}
\subfigure[]
{
\label{fig:ex1:b}
\includegraphics[width=4in]{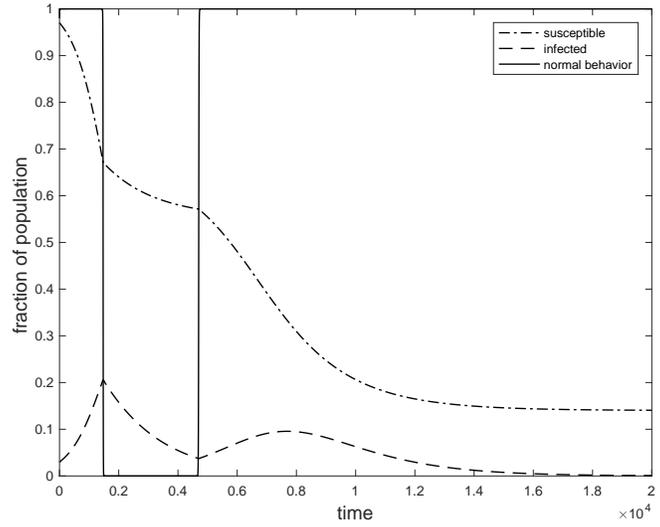}
}
\caption{Example 1, $P_{\rm{start}}=(.97,.03,.98)$.  (a) Phase portrait of the slow limit system in the triangle $x=1$, with the vertical line $S=\frac{\beta_n}{\gamma}=\frac{1}{3}$ and the horizontal line $I=\frac{k}{m_n-m_a}=\frac{1}{10}$ shown.  The slow orbits from $P_{\rm{start}}$ to $P_1$ and from $P_2$ to $P_{\rm{end}}$ are shown in this phase portrait.  The slow orbit from $P_1$ to $P_2$ lies in the triangle $x=0$; see Figure \ref{fig:SIR:a}. (b) Solution of the fast system \eqref{Pf1}--\eqref{Pf3} with the same starting point and $\epsilon=.005$. }
\label{fig:ex1}
\end{figure}

\begin{figure}[htb]
\centering
\subfigure[]
{
\label{fig:ex2:a}
\includegraphics[width=4in]{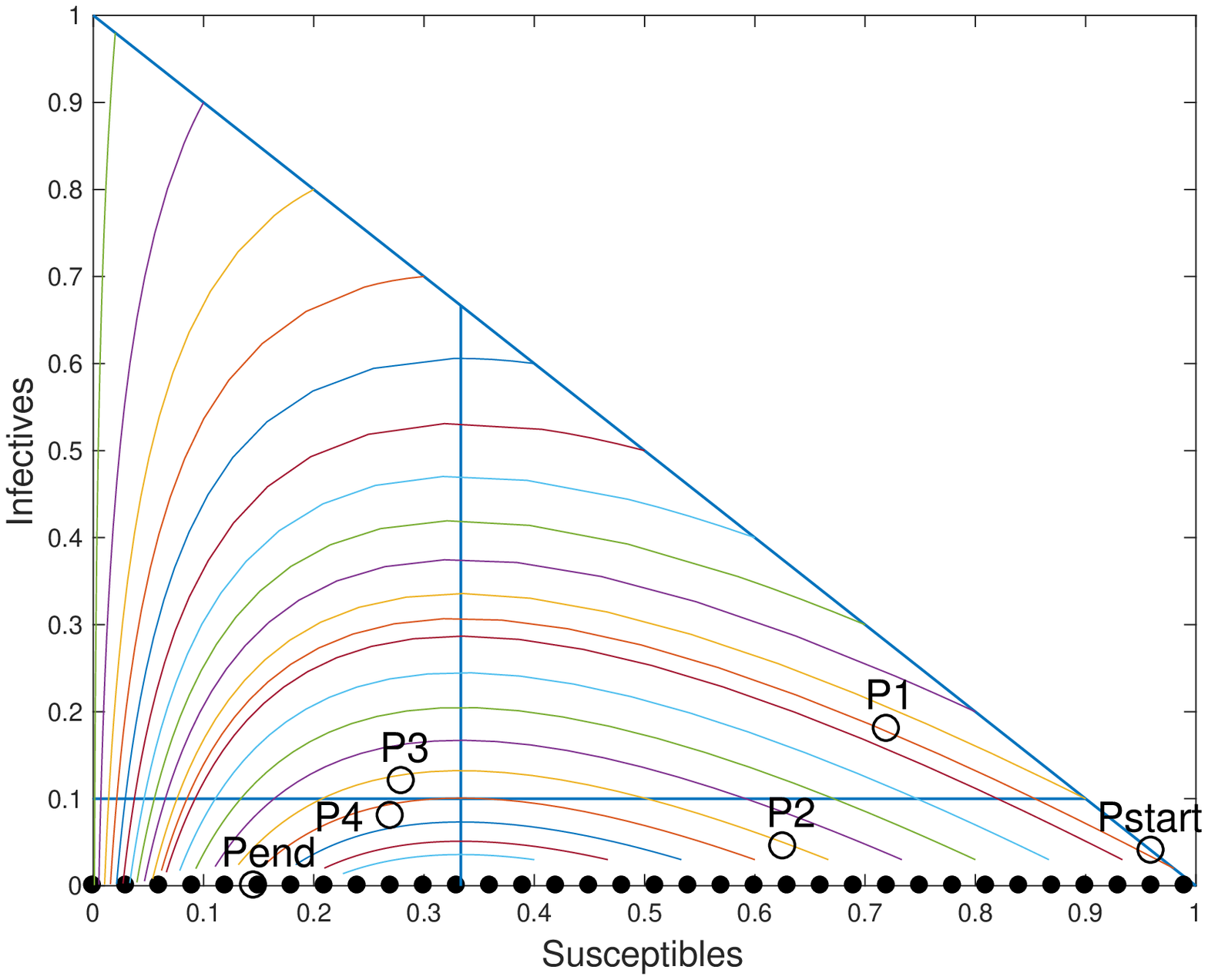}
}
\hspace{.0001in}
\subfigure[]
{
\label{fig:ex2:b}
\includegraphics[width=4in]{figs/timeplot2.eps}
}
\caption{Example 2, $P_{\rm{start}}=(.96,.04,.98)$. (a) Phase portrait of the slow limit system in the triangle $x=1$, with the vertical line $S=\frac{\beta_n}{\gamma}=\frac{1}{3}$ and the horizontal line $I=\frac{k}{m_n-m_a}=\frac{1}{10}$ shown.  The slow orbits from $P_{\rm{start}}$ to $P_1$, from $P_2$ to $P_3$, and from $P_4$ to $P_{\rm{end}}$ are shown in this phase portrait.  The slow orbits from $P_1$ to $P_2$ and from from $P_3$ to $P_5$ lie in the triangle $x=0$; see Figure \ref{fig:SIR:a}. (b) Solution of the fast system \eqref{Pf1}--\eqref{Pf3} with the same starting point and $\epsilon=.005$. }
\label{fig:ex1}
\end{figure}

\begin{figure}[htb]
\centering
\subfigure[]
{
\label{fig:ex3:a}
\includegraphics[width=4in]{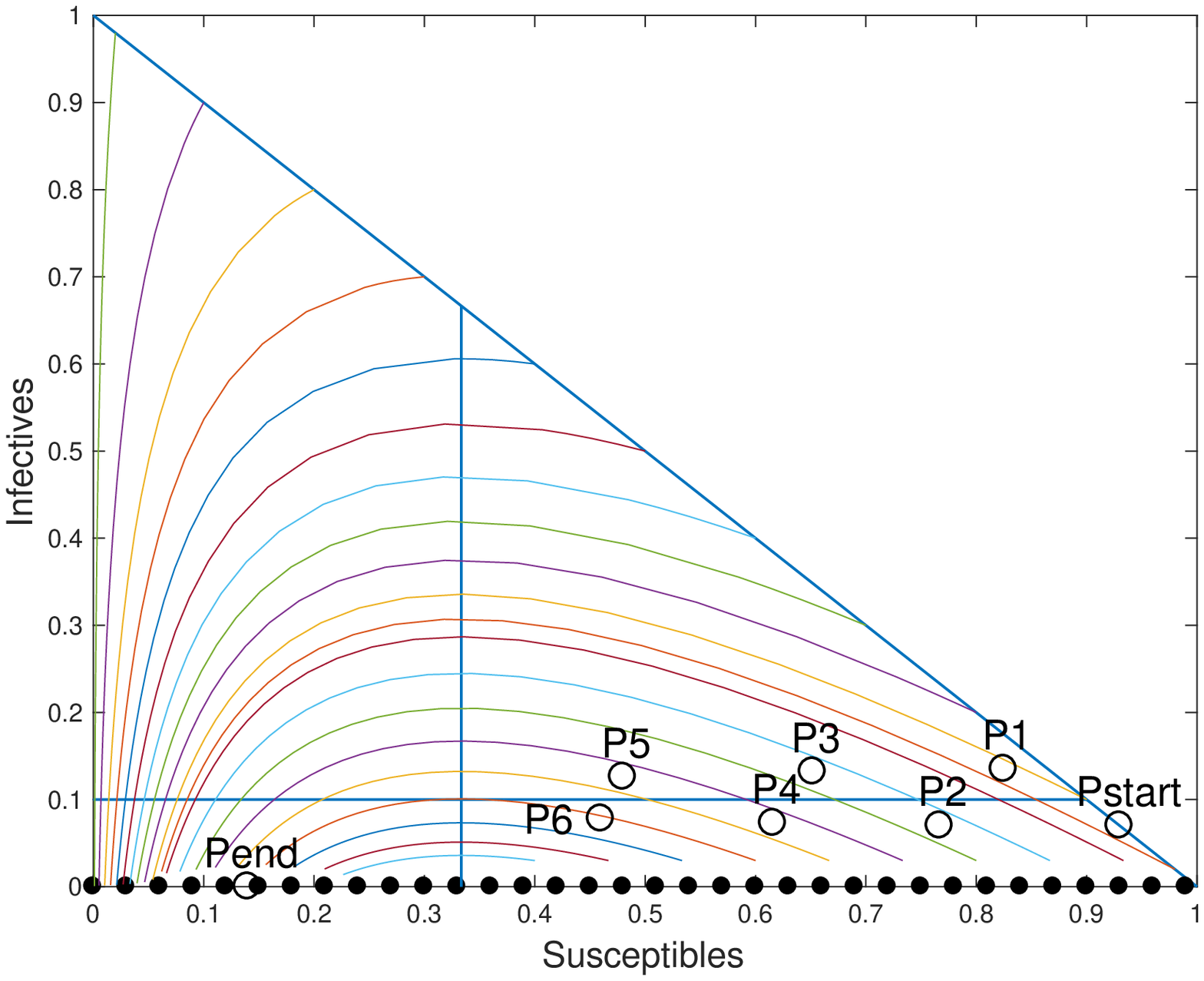}
}
\hspace{.0001in}
\subfigure[]
{
\label{fig:ex3:b}
\includegraphics[width=4in]{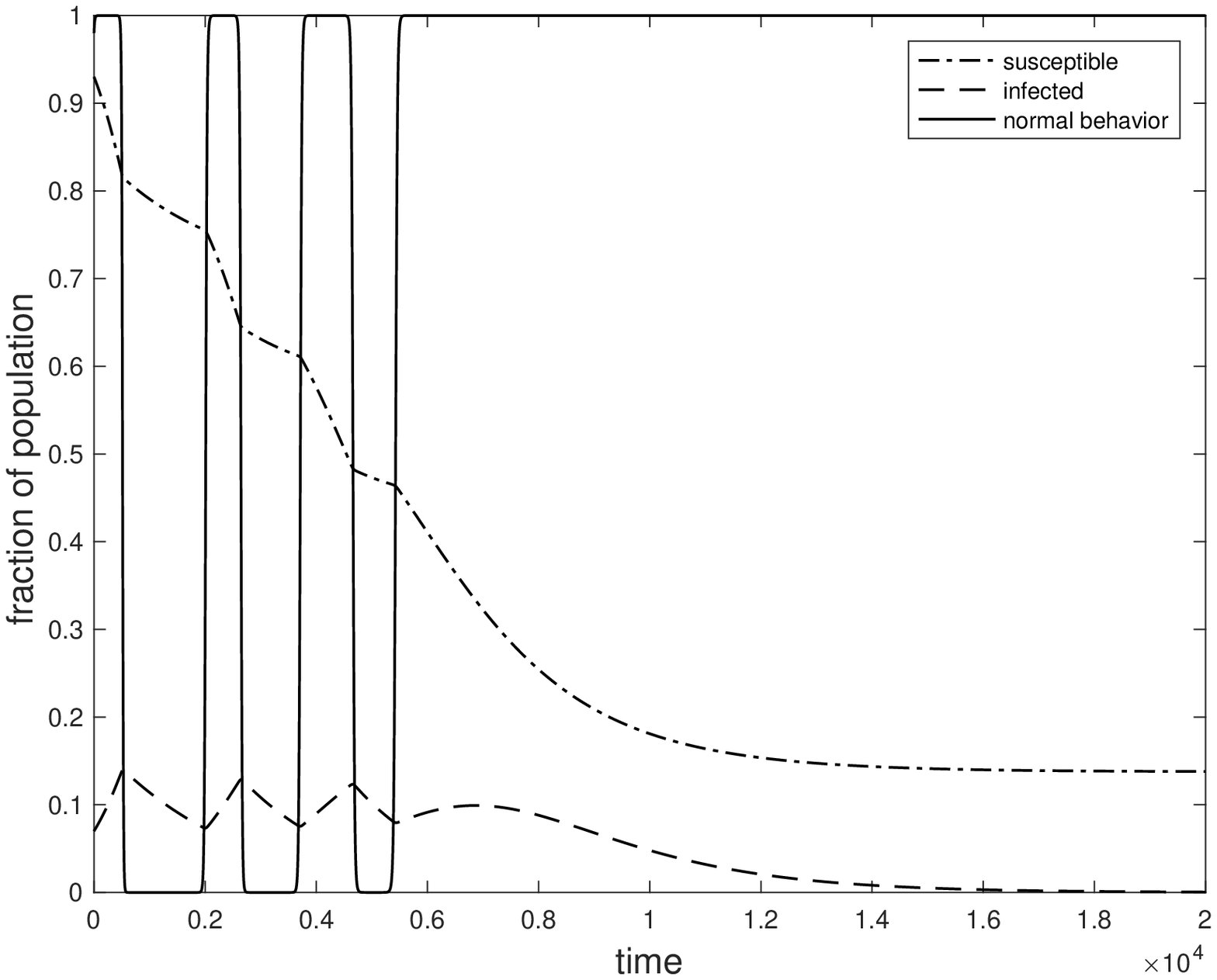}
}
\caption{Example 3,  $P_{\rm{start}}=(.93,.07,.98)$. (a) Phase portrait of the slow limit system in the triangle $x=1$, with the vertical line $S=\frac{\beta_n}{\gamma}=\frac{1}{3}$ and the horizontal line $I=\frac{k}{m_n-m_a}=\frac{1}{10}$ shown.  The slow orbits from $P_{\rm{start}}$ to $P_1$, from $P_2$ to $P_3$, from $P_4$ to $P_5$, and from $P_6$ to $P_{\rm{end}}$ are shown in this phase portrait.  The slow orbits from $P_1$ to $P_2$, from $P_3$ to $P_4$, and from $P_5$ to $P_6$ lie in the triangle $x=0$; see Figure \ref{fig:SIR:a}. (b) Solution of the fast system \eqref{Pf1}--\eqref{Pf3} with the same starting point and $\epsilon=.005$. }
\label{fig:ex1}
\end{figure}

\section{Proofs\label{sec:proofs}}

\subsection{Entry-exit function}

Let $U$ be an open subset of $\mathbb{R}^{n}$, $n\ge1$, and consider the system 
\begin{align}
c^\prime & = \epsilon p(c,z,\epsilon), \label{ee21}  \\
z^\prime & = zq(c,z,\epsilon), \label{ee22}
\end{align}
with $(c,z,\epsilon) \in U \times [0,z_0) \times [0,\epsilon_0)$ and $\;^\prime = \frac{d\;}{d\tau}$.   We assume 
\begin{enumerate}
\item[(E1)] $p$ and $q$ are of class $C^r$, $r\ge2$; 
\item[(E2)] if $q(c,0,0)=0$, then $Dq(c,0,0)p(c,0,0) > 0$.
\end{enumerate}
Assumption (E2) implies that the equation $q(c,0,0)=0$ defines a $C^r$ codimension-one submanifold $S$ of $U$.

Let $t=\epsilon\tau$ and let $\dot{\;}=\frac{d\;}{dt}$.  For $c_0 \in U$ with $q(c_0,0,0)<0$, let $\phi(c_0,t)$ denote the solution of $\dot c = p(c,0,0)$ with $\phi(c_0,0)=c_0$.  Assumption (E2) implies that $\phi(c_0,t)$ crosses the manifold $S$ at most once.

Given $t_1>0$, let $c_1 = \psi(c_0,t_1)$.  Define the
entry-exit integral
\begin{equation}
\label{I}
\mathcal{I}(c_0,c_1) = \int_{0}^{t_1} q(\psi(c_0,t),0,0) \; dt.
\end{equation}

\begin{Thm} 
\label{th:ee}
For system \eqref{ee21}--\eqref{ee22} satisfying (E1)--(E2), assume \newline $\mathcal{I}(\bar c_0,\bar c_1)=0$.  For a small neighborhood $\tilde U$ of $\bar c_0$ in $U$, define the entry-exit function $P^0: \tilde U \to U$ by $P^0(c_0)= c_1$, where $c_1$ is defined implicitly by $\mathcal{I}(c_0,c_1)=0$.  Fix $\delta>0$ sufficiently small.  For a given $\epsilon>0$, consider the solution of \eqref{ee21}--\eqref{ee22} that starts at $(c,z)=(c_0,\delta)$, with $c_0 \in \tilde U$.  Then: 
\begin{enumerate}
\item For  $\epsilon>0$ sufficiently small, the solution  reintersects the section $z=\delta$ at a point $(c,z) = (P^\epsilon(c_0),\delta)$. 
\item $P^\epsilon$ and $P^0$ are $C^r$  functions, and $P^\epsilon \to P^0$ in the $C^r$ sense as $\epsilon\to0$. 
\item Let $\Gamma^\epsilon$ denote the orbit of \eqref{ee21}--\eqref{ee22} from $(c_0,\delta)$ to $(P^\epsilon(c_0),\delta)$.  As $\epsilon\to0$, $\Gamma^\epsilon$ approaches the singular orbit of \eqref{ee21}--\eqref{ee22} consisting of
\begin{enumerate}
\item the line segment $[(c_0,\delta),(c_0,0))$;
\item $(\Gamma,0)$, where $\Gamma$ is the orbit of $\dot c = p(c,0,0)$ from $c_0$ to $c_1=P^0(c_0)$;
\item the line segment $\big((c_1,0),(c_1,\delta]\big)$.
\end{enumerate} 
\end{enumerate}
\end{Thm}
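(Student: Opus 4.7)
The plan is to combine the exact invariance of $\{z=0\}$ (which the special form $z' = zq$ ensures for every $\epsilon \geq 0$) with an analysis of the logarithmic variable $u = \log z$. Passing to slow time $t$ and integrating $\epsilon\dot u = q$ yields
$$\epsilon\bigl(u(t) - u(0)\bigr) = \int_0^t q\bigl(c(s), z(s), \epsilon\bigr)\, ds,$$
so that a trajectory starting at $z = \delta$ returns to $z = \delta$ precisely when this integral first vanishes. First I would observe that on $\{z=0\}$ the reduced flow is $\dot c = p(c,0,0)$ with normal linearization $q(c,0,0)/\epsilon$; by (E2) the normal rate changes sign transversely across $S$, so the portion $\{q<0\}$ is normally attracting and $\{q>0\}$ is normally repelling. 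The hypothesis $\mathcal{I}(\bar c_0,\bar c_1) = 0$ forces $\bar c_0$ to lie in the attracting portion and $\bar c_1$ in the repelling portion, with $q(\bar c_1,0,0) > 0$.

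For $c_0$ near $\bar c_0$ and $\delta,\epsilon$ small, I expect to identify three phases of the trajectory: a brief entry phase in fast time $O(|\log\delta|)$ during which $z$ decays to a tiny value while $c$ is essentially fixed; a slow-tracking phase on which $z$ stays exponentially small and the $c$-component satisfies $|c(t) - \psi(c_0,t)| = O(\epsilon)$, proved by a Gronwall estimate on $\dot c - \dot\psi = p(c,z,\epsilon) - p(\psi,0,0)$ using $p(c,z,\epsilon) - p(c,0,0) = O(z+\epsilon)$; and an exit phase in which $z$ grows back to $\delta$ once the accumulated integral first reaches zero.

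Next I would set
$$J^\epsilon(c_0, t) = \int_0^t q\bigl(c(s;c_0,\delta,\epsilon),\, z(s;c_0,\delta,\epsilon),\, \epsilon\bigr)\, ds,$$
note that $J^0(c_0,t) = \mathcal{I}\bigl(c_0,\psi(c_0,t)\bigr)$, and apply the implicit function theorem. Since $\partial_t J^0(\bar c_0,\bar t_1) = q(\bar c_1,0,0) > 0$, there is a unique $C^r$ function $T^\epsilon(c_0) \to t_1(c_0)$ solving $J^\epsilon = 0$; setting $P^\epsilon(c_0) = c\bigl(T^\epsilon(c_0); c_0, \delta, \epsilon\bigr)$ yields statement (1) and the $C^0$ part of (2). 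Statement (3) then follows by inspection: in the limit, the trajectory consists of the vertical entry fiber from $(c_0,\delta)$ down to $\{z=0\}$, the slow segment $(\psi(c_0,\cdot),0)$, and the vertical exit fiber back to $(c_1,\delta)$.

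The hardest step will be the $C^r$ convergence $P^\epsilon \to P^0$ in (2). A priori, small perturbations of $c_0$ could be amplified during the transit near the loss-of-hyperbolicity set $S$, where Fenichel's theorem does not apply. The content of the entry-exit phenomenon is that the expansion in $c$ accumulated on the repelling side is compensated by the contraction on the attracting side, so that the variational equations remain controlled up to order $r$. To make this precise I would differentiate the trajectory with respect to $c_0$ up to order $r$, bound the resulting variational integrals using the exponential smallness of $z$ along the slow-tracking phase together with the transversality $q(\bar c_1,0,0) > 0$ at the exit, and show that these derivatives converge uniformly to their $\epsilon = 0$ counterparts. This is the key estimate carried out in \cite{demaesschalck08}, and the present theorem can be regarded as a restatement of that result in the coordinates used here.
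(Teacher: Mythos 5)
Your outline is a faithful sketch of the standard entry--exit argument, but note that the paper does not prove this theorem at all: it simply cites \cite{demaesschalck08}, with \cite{weishi00} invoked to relate the form \eqref{ee21}--\eqref{ee22} to the standard form used there. In that sense your proposal and the paper end up in the same place, since you too defer the decisive step --- the $C^r$ smoothness of the transition map up to $\epsilon=0$ --- to \cite{demaesschalck08}. What you add is the correct mechanism: integrating $\epsilon\dot u=q$ for $u=\log z$ so that the return to $z=\delta$ is equivalent to the first positive zero of the accumulated integral, the Gronwall tracking of $c$ along the exponentially thin layer, and the transversality $q(\bar c_1,0,0)>0$ at the exit (which does follow, since the integrand starts negative, the integral must return to zero, and by (E2) the reduced solution crosses $S$ at most once, from $q<0$ to $q>0$). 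Two small cautions. First, your $J^\epsilon(c_0,t)$ vanishes trivially at $t=0$, so the implicit function theorem must be applied near $(\bar c_0,\bar t_1)$ to isolate the first \emph{positive} zero. Second, even statement (1) and the $C^0$ part of (2) require uniform convergence $J^\epsilon\to J^0$ through the entry and exit layers, where the dependence on $\epsilon$ is singular; your Gronwall estimate handles the interior slow phase but the layers are exactly where \cite{demaesschalck08} does the real work, and your explicit acknowledgement of that is appropriate. As a blind reconstruction of a result the paper treats as a black box, the proposal is sound.
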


Theorem \ref{th:ee} is proved in \cite{demaesschalck08} under the assumption that for $\epsilon=0$, the system \eqref{ee21}--\eqref{ee22} has been written in a standard form.  The relation of the standard form to the form \eqref{ee21}--\eqref{ee22} is explained in \cite{weishi00}.

\subsection{Entry-exit function for the Poletti model in the triangle $x=0$ \label{sec:ee02}} 

The Poletti model \eqref{Pf1}--\eqref{Pf3} satisfies the hypotheses of Theorem \ref{th:ee} for $n=1$ and any $r\ge2$, with $(S,I)$ corresponding to $p$ and $x$ corresponding to $z$.  The set $U$ is $\hat T$, and $S$ is the line $I=\frac{k}{m_n-m_a}$.  ($\hat T$ is not open, since it includes a segment of the line $S+T=1$, but this does not cause any difficulty.)

Let $(S_0,I_0) \in \hat T$ with $I_0>\frac{k}{m_n-m_a}$. Let $(S(t),I(t))=\psi_0\big((S_0,I_0),t\big)$, let $t_1>0$, and let $(S_1,I_1)=(S(t_1),I(t_1))$. The formula \eqref{I0} for the entry-exit integral $\mathcal{I}_0$ follows immediately from \eqref{I}.

Proof of Proposition \ref{P:exactly}: We consider the solution $(S(t),I(t))$ of \eqref{Ps001}--\eqref{Ps002} defined above.  Since $I(t)$ is decreasing, there is a unique $t_*>0$ such that $I(t_*)=\frac{k}{m_n-m_a}$. The integral $\int_{0}^{t_1} k-(m_n-m_a)I(t) \; dt$ is negative and decreasing for $0<t\le t_*$ and is increasing for $t>t_*$.  To prove Proposition \ref{P:exactly} it suffices to show that $\int_{0}^{\infty} k-(m_n-m_a)I(t) \; dt=\infty$.  Since $\int_{0}^{\infty} k\; dt=\infty$, it suffices to show that $\int_{0}^{\infty} (m_n-m_a)I(t) \; dt$ is finite.  To see this, just note that $(S(t),I(t))$ approaches a normally attracting equilibrium $(S_f,0)$, so $I(t)\to0$ exponentially.  

\subsection{Entry-exit function for the Poletti model in the triangle $x=1$\label{sec:ee12}}

To treat the Poletti model \eqref{Pf1}--\eqref{Pf3} near $x=1$, we first make the change of variables $y=1-x$.  We obtain the system
\begin{align}
S^\prime &= -\epsilon\left(\beta_n(1-y)+\beta_ay\right)SI, \label{Pf1y} \\
I^\prime &= \epsilon\left(\beta_n(1-y)+\beta_ay\right)SI - \epsilon\gamma I, \label{Pf2y} \\
y^\prime &=\epsilon(1-y)y(\beta_n-\beta_a)I-(1-y)y\left(k-(m_n-m_a)I\right). \label{Pf3y} 
\end{align}

Define the curve $C$ to be the union of the line $I=\frac{k}{m_n-m_a}$, $0<S\le S^*$, and $\Gamma^*$ defined in Subsection \ref{sec:ee1}.  Let $U$ be the part of $T$ above $C$.

The system \eqref{Pf1}--\eqref{Pf3} satisfies the hypotheses of Theorem \ref{th:ee} for $n=1$ and any $r\ge2$, with $(S,I)$ corresponding to $p$ and $y$ corresponding to $z$.  The set $U$ is defined above, and $S$ is the line segment $I=\frac{k}{m_n-m_a}$, $S^*<S<1$.  (Again the set $U$ is not open because it includes a segment of the line $S+T=1$, but this does not cause any difficulty.)

As in the previous subsection, the formula for the entry-exit integral $\mathcal{I}_1\big((S_0,I_0),(S_1,I_1)\big)$ follows immediately from \eqref{I}.

\subsection{Solutions that approach equilibria\label{sec:lim}}

Recall the sets  $V_-$ and $V_+$ defined in Subsection \ref{sec:ee1}.   

\begin{Prop} \label{P:lim1}
Let $K$ be a compact subset of $V_-$.  For each $(S_0,I_0) \in K$, let $(S_f,0)=\lim_{t\to\infty}\psi_0\big((S_0,I_0),t\big)$.  Define $Q^0 : K \to \mathbb{R}$ by $Q^0(S_0,I_0) = S_f$.  Let $\delta>0$ be small. Then:
\begin{enumerate}
\item For small $\epsilon>0$ and for each $(S_0,I_0) \in K$, there exists $S_f^\epsilon \in \mathbb{R}$ such that $\lim_{t\to\infty}\phi^\epsilon\big((S_0,I_0,\delta),t\big)= (S_f^\epsilon,0,1)$.  
\item Define $Q^\epsilon: K \to \mathbb{R}$ by $Q^\epsilon(S_0,I_0) = S_f^\epsilon$. Then $Q^\epsilon$ and  $Q^0$ are $C^{r-1}$  functions, and $Q^\epsilon \to Q^0$ in the $C^{r-1}$ sense as $\epsilon\to0$. 
\end{enumerate}
\end{Prop}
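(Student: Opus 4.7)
My plan rests on three observations about the plane $\{x=1\}$: it is invariant under \eqref{Pf1}--\eqref{Pf3} for \emph{every} $\epsilon \ge 0$; the restricted flow is exactly $\psi_1$, independent of $\epsilon$; and the line $L = \{(S,0,1) : 0 \le S \le 1\}$ consists of equilibria for every $\epsilon \ge 0$. For $(S_0,I_0) \in V_-$ the orbit $\psi_1((S_0,I_0),t)$ lives in the strip $I < \frac{k}{m_n-m_a}$, where the linearization of the $x$-equation in the direction transverse to $\{x=1\}$ has rate $-\frac{1}{\epsilon}(k-(m_n-m_a)I) + O(1)$, uniformly negative on a compact tube
\begin{equation*}
\mathcal{T} \;=\; \overline{\bigcup_{(S_0,I_0) \in K}\{\psi_1((S_0,I_0),t) : t \ge 0\}} \;\subset\; V_- \cup L.
\end{equation*}
Thus $\{x=1\}$ is uniformly normally attracting along $\mathcal{T}$, and at each equilibrium in $Q^0(K)\times\{0\}\times\{1\}$ the $3\times3$ Jacobian has eigenvalues $0$, $\beta_n S_f - \gamma < 0$, and $-k/\epsilon < 0$, so $L$ is a normally hyperbolic 1-dimensional center manifold.

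First I would invoke Fenichel's invariant manifold and strong stable foliation theorems in a tubular neighborhood $N$ of $\mathcal{T}$ (see \cite{jones94, kuehn15}) to produce a $C^{r-1}$ strong stable foliation $\mathcal{F}^{ss}_\epsilon$ of $N$ with one-dimensional leaves transverse to $\{x=1\}$. The leaves reduce to the vertical fast fibers $(S,I)=\text{const}$ when $\epsilon=0$, and the basepoint projection $\pi^\epsilon : N \to \{x=1\}$ is $C^{r-1}$ with $\pi^\epsilon \to \pi^0$ in $C^{r-1}$ as $\epsilon \to 0$, where $\pi^0(S,I,x) = (S,I,1)$. Next I would show that any trajectory starting at $(S_0,I_0,\delta)$ with $(S_0,I_0)\in K$ enters $N$ in bounded time and lies on a well-defined leaf: if $\delta$ is close to $0$ (interpretation as a leaf near $x=0$), the initial fast segment of \eqref{Pf01}--\eqref{Pf03} drives $x$ from $\delta$ to near $1$ in time $O(\epsilon)$ while $(S,I)$ drifts by $O(\epsilon)$; if $\delta$ is to be interpreted via $x=1-\delta$, the point is already in $N$. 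The defining property of $\mathcal{F}^{ss}_\epsilon$ then gives
\begin{equation*}
\bigl\|\phi^\epsilon((S_0,I_0,\delta),t) - \phi^\epsilon(\pi^\epsilon(S_0,I_0,\delta),t)\bigr\| \longrightarrow 0 \quad \text{exponentially as } t \to \infty.
\end{equation*}
Since $\phi^\epsilon$ restricted to $\{x=1\}$ is $\psi_1$, and the base point $\pi^\epsilon(S_0,I_0,\delta)$ is $C^{r-1}$-close to $(S_0,I_0,1)\in V_-$, the companion trajectory converges to $(Q^0(\widehat\pi^\epsilon(S_0,I_0,\delta)),0,1)$, where $\widehat\pi^\epsilon$ is the $(S,I)$-component of $\pi^\epsilon$. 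Hence $S_f^\epsilon$ exists and equals $Q^0\circ\widehat\pi^\epsilon(S_0,I_0,\delta)$.

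Finally I would establish the claimed $C^{r-1}$ smoothness and convergence. Smoothness of $Q^0$ follows from the conservation law $I+S-\tfrac{\gamma}{\beta_n}\ln S = S_f - \tfrac{\gamma}{\beta_n}\ln S_f$: the implicit function theorem applied at $(S_f,0)$ with $S_f < \gamma/\beta_n$ (where $\tfrac{d}{dS}(S-\tfrac{\gamma}{\beta_n}\ln S) \ne 0$) makes $Q^0$ real-analytic on $V_-$. Combining this with the $C^{r-1}$ regularity of $\widehat\pi^\epsilon$ and the $C^{r-1}$ convergence $\widehat\pi^\epsilon(\cdot,\delta) \to \text{id}$ as $\epsilon \to 0$ yields $Q^\epsilon = Q^0 \circ \widehat\pi^\epsilon(\cdot,\delta) \to Q^0$ in $C^{r-1}$.

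The principal obstacle is the infinite-time conclusion: Fenichel approximation is a finite-time statement, whereas the proposition compares $\epsilon$-orbits to singular orbits over all $t \ge 0$. The bridge is precisely that the $\omega$-limit set $L$ is an invariant set common to all $\epsilon \ge 0$ and is normally hyperbolic, so the strong stable foliation persists over a neighborhood of $L \cap (\overline{Q^0(K)}\times\{0\}\times\{1\})$ and composes smoothly with the $\psi_1$-flow's limit map $Q^0$. A secondary subtlety is verifying that for small $\epsilon$ the base point $\widehat\pi^\epsilon(S_0,I_0,\delta)$ genuinely lies in $V_-$ (not on the boundary $\Gamma^*$), which follows from continuity of $\widehat\pi^\epsilon$ in $\epsilon$ and the openness of $V_-$, uniformly on the compact set $K$.
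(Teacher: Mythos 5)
Your argument is correct and takes essentially the same approach as the paper's proof: both treat the forward-invariant compact portion of the plane $x=1$ swept out by the $\psi_1$-orbits of $K$ as a normally attracting critical manifold and use the persistent Fenichel stable foliation to track each orbit to a base point near $(S_0,I_0,1)$, then compose with the $\epsilon$-independent slow flow to get existence of $S_f^\epsilon$ and $C^{r-1}$ convergence of $Q^\epsilon$ to $Q^0$. You simply supply details the paper delegates to ``the theory of normally hyperbolic invariant manifolds,'' such as the analyticity of $Q^0$ via the first integral and the bridge from finite-time tracking to the infinite-time limit.
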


\begin{proof}
Let $\hat K$ be a compact subset of $V_-$ that contains $K$ in its interior.  Let $\tilde K$ denote the union of $\hat K$, solutions of \eqref{Ps011}--\eqref{Ps012} that start in $K$, and the limits of these solutions. 

For the system \eqref{Pf1}--\eqref{Pf3} with $\epsilon=0$, $\tilde K$ is a union of equilbria that is compact, normally hyperbolic, and normally attracting.  
The point $(S_0,I_0,\delta)$ is in the stable fiber $(S_0,I_0,1)$. 

For small $\epsilon>0$, the set $\tilde K$ remains normally hyperbolic and normally attracting.
$(S_0,I_0,\delta)$ is in the stable fiber of a point $(S^\epsilon,I^\epsilon,1)$ near $(S_0,I_0,1)$.  The slow system \eqref{Ps1}--\eqref{Ps3}, restricted to $x=1$, is still \eqref{Ps011}--\eqref{Ps012}.  The solution of \eqref{Ps011}--\eqref{Ps012} through $(S^\epsilon,I^\epsilon)$ lies near the solution of \eqref{Ps011}--\eqref{Ps012} through $(S_0,I_0)$.  

Given these observations, the proposition follows from the theory of normally hyperbolic invariant manifolds.
\end{proof}

For each $(S_0,I_0) \in V_+$, there exists 
$t_{\rm{in}}(S_0,I_0)>0$ and $t_{\rm{out}}(S_0,I_0)\ge t_{\rm{in}}(S_0,I_0)$ such that $\psi_1\big((S_0,I_0),t\big)$ enters the region $I \ge \frac{k}{m_n-m_a}$ at $t=t_{\rm{in}}(S_0,I_0)$ and leaves that region at $t=t_{\rm{out}}(S_0,I_0)$.  

\begin{Prop} \label{P:lim2}
Let $K$ be a compact subset of $V_+$.  Assume that for each $(S_0,I_0) \in K$, $\int_0^{t_{\rm{out}}(S_0,I_0)} k-(m_n-m_a)I(t) \; dt  <0$, where $I(t)$ is defined by $(S(t),I(t)) = \psi_1\big((S_0,I_0),t\big)$.  Choose $t_1>\rm{sup}_K t_{\rm{out}}(S_0,I_0)$.  Define $Q^0 : K \to V_-$ by $Q^0(S_0,I_0) = \psi_1\big((S_0,I_0),t_1\big)$. Let $\delta>0$ be small.  For small $\epsilon>0$,  define $Q^\epsilon(S_0,I_0)$ and $Z^\epsilon(S_0,I_0)$ by $(Q^\epsilon(S_0,I_0),Z^\epsilon(S_0,I_0)) = \phi^\epsilon\big((S_0,I_0,\delta),t_1\big)$. Then:
\begin{enumerate}
\item $Q^\epsilon$, $Z^\epsilon$ and  $Q^0$ are $C^{r}$  functions.
\item $Q^\epsilon \to Q^0$ in the $C^{r}$ sense. 
\item There exists $A>0$ such that  $Z^\epsilon \le \delta e ^{-At_1}$.
\end{enumerate}
\end{Prop}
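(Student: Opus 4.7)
The plan is to follow the orbit $\phi^\epsilon\big((S_0,I_0,\delta),t\big)$ through three successive regimes: attraction to a Fenichel slow manifold over the initial attracting piece $t \in [0,t_{\rm in}]$ of the reduced orbit $\psi_1\big((S_0,I_0),t\big)$; passage through the repelling window $t \in [t_{\rm in}, t_{\rm out}]$ in which the critical manifold $\{x=1\}$ is normally unstable; and reattraction over $t \in [t_{\rm out}, t_1]$. I would work in the coordinates $y = 1-x$ of Subsection~\ref{sec:ee12}, so that in system \eqref{Pf1y}--\eqref{Pf3y} the attracting critical manifold becomes $\{y = 0\}$, the linearized contraction rate of $y$ along the slow flow is $-(k-(m_n-m_a)I)$, and the initial condition $(S_0,I_0,\delta)$ is read as $y = \delta$.

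Over the two attracting subintervals, the relevant segments of $\{y=0\}$ are compact and normally hyperbolic, so Fenichel's theorem \cite{jones94,kuehn15} supplies $C^r$ slow manifolds $M^\epsilon$ that are $C^r$-close to $\{y=0\}$, together with $C^{r-1}$ stable foliations. Standard estimates then give: after an initial transient of length $O(\epsilon|\log\epsilon|)$ the trajectory lies within an exponentially small distance of $M^\epsilon$; and along $M^\epsilon$ the slow motion is a $C^r$ perturbation of $\psi_1$. In the repelling window $[t_{\rm in},t_{\rm out}]$, where Fenichel theory does not produce a local invariant manifold, I would instead linearize the $y$-equation. Writing
\[
\dot y = -y\,(1-y)\,\big(k-(m_n-m_a)I\big) + \epsilon\,y\,(1-y)(\beta_n-\beta_a)I
\]
and applying Gronwall's inequality yields, as long as $y$ stays small,
\[
y(t) \;=\; \delta \, \exp\!\Big({-}\!\!\int_0^t\!\big(k-(m_n-m_a)I(s)\big)\,ds\Big)\big(1 + O(\delta) + O(\epsilon)\big).
\]
The hypothesis on the accumulated entry--exit integral at $t_{\rm out}$ is precisely what forces the exponent to remain uniformly bounded (of one sign) on $K$, so that the estimate is self-consistent and the orbit never leaves the neighborhood of $\{y=0\}$ in which the linearization is valid.

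Concatenating the three regimes yields $Q^\epsilon$ and $Z^\epsilon$; the $C^{r}$ smoothness of $Q^\epsilon$ and the $C^{r}$ convergence $Q^\epsilon \to Q^0$ follow from smooth dependence of the flow $\phi^\epsilon$ on initial data and on $\epsilon$, together with the $C^r$ smoothness of $M^\epsilon$ and its stable fibration. The bound $Z^\epsilon \le \delta e^{-At_1}$ reads directly off the Gronwall expression evaluated at $t_1$, since on $[t_{\rm out},t_1]$ the slow orbit satisfies $I < \frac{k}{m_n-m_a}$ with a gap uniform in $K$, making the integrand bounded below by a positive constant on that subinterval. The main obstacle is to keep all constants uniform in $(S_0,I_0)\in K$ across the two crossings of $I = \frac{k}{m_n-m_a}$ at $t_{\rm in}$ and $t_{\rm out}$, where the normal hyperbolicity of the critical manifold degenerates; compactness of $K$, the strict sign of the entry--exit integral at $t_{\rm out}$, and transversality of $\psi_1$ to the line $I = \frac{k}{m_n-m_a}$ together provide the uniform control, in the spirit of the proof of Theorem~\ref{th:ee} in \cite{demaesschalck08}.
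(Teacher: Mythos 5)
Your route is genuinely different from the paper's: the paper proves this proposition by bare citation of Proposition 3 of \cite{demaesschalck08} and its following remark, identifying as the key hypothesis that the accumulated integral $\int_0^{t_2} k-(m_n-m_a)I(t)\,dt$ keeps a fixed sign for \emph{every} intermediate time $0<t_2\le t_1$, not merely at $t_{\rm out}$. Your direct Fenichel-plus-Gronwall argument is a legitimate, more self-contained alternative, and your phrase ``the estimate is self-consistent'' is implicitly the same point --- but you need to say why the single condition at $t_{\rm out}$ controls all intermediate times: along $\psi_1$ the integrand changes sign only at $t_{\rm in}$ and $t_{\rm out}$, so the accumulated integral is monotone on each of the three subintervals and attains its extremum over $(0,t_1]$ at $t_{\rm out}$; compactness of $K$ then upgrades the pointwise condition to a uniform one-sided bound. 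Without that observation the stated hypothesis does not obviously prevent an early exit.

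Two further gaps. First, signs: your own Gronwall formula gives $y \approx \delta\exp\bigl(-\tfrac{1}{\epsilon}\int_0^t (k-(m_n-m_a)I)\,ds\bigr)$ in slow time (you dropped the $1/\epsilon$, which only strengthens conclusion (3)), so $y$ stays small and decays precisely when that integral stays \emph{positive}; the hypothesis as printed asserts the integral is \emph{negative} at $t_{\rm out}$. The resolution is that the integrand in \eqref{I1} and in this hypothesis must be read as the normal eigenvalue of \eqref{Pf3y} at $y=0$, namely $(m_n-m_a)I-k$, consistent with the attraction/repulsion description in Subsection \ref{sec:ee1}; you reproduce the printed sign and then argue as if it were the other one, which is exactly the bookkeeping your linearization is supposed to check. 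Second, and more seriously, conclusions (1)--(2) claim $C^r$ smoothness of $Q^\epsilon$ and $C^r$ convergence $Q^\epsilon\to Q^0$ as $\epsilon\to0$; this does not follow from ``smooth dependence of the flow on initial data and on $\epsilon$,'' since at $\epsilon=0$ the fast-time interval $t_1/\epsilon$ is unbounded and normal hyperbolicity degenerates at the two crossings of $I=\frac{k}{m_n-m_a}$. Uniform $C^r$ control through those turning points is precisely the content of \cite{demaesschalck08}; your sketch defers it with ``in the spirit of,'' so the smoothness assertions ultimately rest on the same citation the paper uses, while your added value is the explicit dynamical estimate for conclusion (3).
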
 

\begin{proof}
See Proposition 3 of \cite{demaesschalck08} and the remark that follows it.  The key assumption needed is that for $(S_0,I_0) \in K$ and $0<t_2\le t_1$,  $\int_0^{t_2} k-(m_n-m_a)I(t) \; dt  < 0$.
\end{proof}

\begin{Prop} \label{P:lim3}
Proposition \ref{P:lim1} also holds for a compact subset $K$ of $V_+$ that satisfies the assumption of Proposition \ref{P:lim2}.
\end{Prop}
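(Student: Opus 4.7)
The strategy is to compose Proposition \ref{P:lim2} with the argument underlying Proposition \ref{P:lim1}. Fix $t_1 > \sup_{(S_0,I_0)\in K} t_{\rm{out}}(S_0,I_0)$. At any crossing of the line $I = \frac{k}{m_n-m_a}$ from above by $\psi_1$, one has $\dot I = I(\beta_n S - \gamma) < 0$, hence $S < \gamma/\beta_n = S^*$; this places $\psi_1((S_0,I_0), t_1)$ in $V_-$ for every $(S_0,I_0) \in K$. Consequently the image set $K' := \{\psi_1((S_0,I_0), t_1) : (S_0,I_0) \in K\}$ is a compact subset of $V_-$.

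The first step is to apply Proposition \ref{P:lim2} on the interval $[0,t_1]$, working in the chart $y = 1-x$ of \eqref{Pf1y}--\eqref{Pf3y}: for small $\epsilon > 0$, the trajectory from $(S_0,I_0,\delta)$ arrives at time $t_1$ at $(Q_*^\epsilon(S_0,I_0), Z^\epsilon(S_0,I_0))$, with $Q_*^\epsilon \to Q_*^0 = \psi_1(\cdot,t_1)$ in $C^r$ and $Z^\epsilon \le \delta e^{-At_1}$. For $\epsilon$ sufficiently small, $Q_*^\epsilon(K)$ lies in a compact enlargement $\hat K$ of $K'$ still inside $V_-$, while $Z^\epsilon \ll \delta$.

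The second step is to restart the trajectory at $(Q_*^\epsilon(S_0,I_0), Z^\epsilon(S_0,I_0))$. Over $\hat K \subset V_-$ the slow manifold $y = 0$ of \eqref{Pf1y}--\eqref{Pf3y} is normally hyperbolic and normally attracting; the proof of Proposition \ref{P:lim1} is an application of Fenichel theory to this manifold, and its conclusion requires only that the starting $y$-coordinate lie inside the stable neighborhood of the perturbed manifold, not that it equal any specific $\delta$. Since $Z^\epsilon$ is exponentially small in $t_1$, this hypothesis is met, so the trajectory is drawn onto the perturbed manifold and then tracks \eqref{Ps011}--\eqref{Ps012} in a $V_-$-type region, converging to $(S_f^\epsilon,0,1)$ with $S_f^\epsilon$ close to $\lim_{t\to\infty}\psi_1(Q_*^0(S_0,I_0),t)$. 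By the semigroup property of $\psi_1$, this limit equals $\lim_{t\to\infty}\psi_1((S_0,I_0),t) =: Q^0(S_0,I_0)$. Both conclusions of Proposition \ref{P:lim1} follow: existence of $S_f^\epsilon$, and $C^{r-1}$ convergence $Q^\epsilon \to Q^0$ as a composite of the $C^r$ endpoint map of Proposition \ref{P:lim2} with the $C^{r-1}$ Fenichel holonomy used in Proposition \ref{P:lim1}.

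The principal obstacle is rigorously handling the ``restart Proposition \ref{P:lim1} at a variable height $Z^\epsilon$'' step: the starting $y$-coordinate at restart depends on both $(S_0,I_0)$ and $\epsilon$, so one must invoke the joint $C^{r-1}$ smoothness of Fenichel's stable foliation in base point, fiber coordinate, and $\epsilon$. Together with the uniform bound $Z^\epsilon \to 0$ from Proposition \ref{P:lim2}, composition with the $C^r$ endpoint map of Proposition \ref{P:lim2} yields the claimed $C^{r-1}$ convergence.
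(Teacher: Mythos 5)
Your proposal is correct and follows essentially the same route as the paper's own (quite terse) proof: flow for time $t_1$ via Proposition \ref{P:lim2} into a compact subset of $V_-$, then apply Proposition \ref{P:lim1} restarted at the $\epsilon$-dependent point $\phi^\epsilon\big((S_0,I_0,\delta),t_1\big)$ rather than at height $\delta$. You in fact supply more detail than the paper, which only remarks that this restart ``requires minor changes to Proposition \ref{P:lim1}''; your identification of the needed change as the joint smoothness of the Fenichel stable foliation in base point, fiber coordinate, and $\epsilon$, together with the verification that $\psi_1(K,t_1)\subset V_-$, is exactly the right content.
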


\begin{proof}
Roughly speaking, we want to apply Proposition \ref{P:lim1} to the compact set $\phi_1(K,t_1)$ in $V_-$. However, corresponding to the point $(S_1,I_1)=\phi_1\big((S_0,I_0),t_1\big)$, we want to look not at the solution of \eqref{Pf1}--\eqref{Pf3} that starts at $(S_1,I_1,\delta)$, but at the solution that starts at $\phi^\epsilon\big((S_0,I_0,\delta),t_1\big)$.  This requires minor changes to Proposition \ref{P:lim1}.
\end{proof}

In the situation of Proposition \ref{P:lim1} or \ref{P:lim2}, we can also describe the limiting position of orbits.

\begin{Prop}
\label{P:lim4}
Let $K$ be a compact subset of $V_-$ that satisfies the assumption of Proposition \ref{P:lim1}, or a compact subset of $V_+$ that satisfies the assumption of Proposition \ref{P:lim2}.  Let $(S_0,I_0) \in K$.  Then there is an equilibrium $(S_f,0)$ of \eqref{Ps011}--\eqref{Ps012}  such that $\phi_1(S_0,I_0),t) \to (S_f,0)$ as $t\to\infty$.  Let $\Gamma^\epsilon$ denote the orbit of \eqref{Pf1}--\eqref{Pf3} that starts at $(S_0,I_0,\delta)$.  As $\epsilon\to0$, $\Gamma^\epsilon$ approaches the singular orbit of \eqref{Pf1}--\eqref{Pf3} consisting of
\begin{enumerate}
\item the line segment $[(S_0,I_0,\delta),(S_0,I_0,1))$;
\item $\{\phi_1(S_0,I_0),t) : t \ge 0\}$.
\end{enumerate} 
\end{Prop}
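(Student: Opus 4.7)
The plan is to decompose the orbit $\Gamma^\epsilon$ into three regimes --- a fast transient near the starting point, a slow segment tracking the critical manifold $\{x=1\}$ on a finite window $[0,T]$, and an infinite-time tail in which the orbit approaches its limiting equilibrium --- and to control each regime separately before assembling them.

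First I would dispose of the existence of the limiting equilibrium $(S_f,0)$. The slow flow $\psi_1$ on the triangle $x=1$ is exactly the SIR model \eqref{Ps011}--\eqref{Ps012} with $R_0=\beta_n/\gamma>1$, so the phase portrait analysis in Section \ref{sec:SIR} guarantees that every orbit in $\hat T$ approaches some equilibrium $(S_f,0)$ on the segment $I=0$ as $t\to\infty$.

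Next I would establish Hausdorff convergence on compact initial time intervals. In fast time $\tau$ the transient from $y=\delta$ (equivalently $x=1-\delta$) down to $y=O(\epsilon)$ occupies $O(|\ln\epsilon|)$ fast units, which corresponds to $O(\epsilon|\ln\epsilon|)\to 0$ in slow time; in the limit this traces out the vertical segment (1). For slow times $t_\star\le t\le T$ with $T$ fixed, I would invoke Fenichel's theorem on the attracting portion of the critical manifold to produce a locally invariant slow manifold $M_\epsilon$ that is $O(\epsilon)$-close to $\{y=0\}$ and along which the reduced dynamics is a smooth $\epsilon$-perturbation of $\psi_1$. When $K\subset V_-$ the singular orbit stays in the attracting region and this alone suffices; when $K\subset V_+$ the orbit must cross the repelling band $I>k/(m_n-m_a)$ during $t_{\rm in}\le t\le t_{\rm out}$, and here Proposition \ref{P:lim2} takes over to guarantee that the accumulated attraction outweighs the repulsion, so the trajectory re-emerges within $O(e^{-At_1}\delta)$ of $y=0$ at time $t_1$ with $Q^\epsilon\to Q^0$.

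To pass to $t=\infty$, I would choose $T$ so large that $\psi_1((S_0,I_0),T)$ lies in a small compact neighborhood of $(S_f,0)$ inside $V_-$; this is possible by the first step. Propositions \ref{P:lim1} and \ref{P:lim3} then give $S_f^\epsilon\to S_f$, and the uniform exponential normal attraction along the equilibrium segment $\{I=0,x=1\}$ ensures that $\Gamma^\epsilon$ for $t\ge T$ stays in a shrinking neighborhood of the union of the remaining slow piece $\{\phi_1((S_0,I_0),t):t\ge T\}$ and the endpoint $(S_f,0,1)$. Splicing the three regimes yields $\Gamma^\epsilon\to\mathcal{S}$ in the Hausdorff sense.

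The step I expect to be the main obstacle is the $V_+$ case on the intermediate window, where one must justify that the full orbit does not peel off the critical manifold during its excursion through the repelling region. This is the delayed-loss-of-stability phenomenon controlled by the entry-exit integral; the strict sign condition $\int_0^{t_2} k-(m_n-m_a)I\,dt<0$ for all $0<t_2\le t_1$ is exactly the hypothesis under which \cite{demaesschalck08} supplies the necessary uniform estimate. Once that estimate is in hand, the remainder is bookkeeping: matching the Fenichel picture at the start and end of the compact window to the long-time convergence at the equilibrium.
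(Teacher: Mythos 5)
The paper states Proposition \ref{P:lim4} without an explicit proof, treating it as a direct consequence of Propositions \ref{P:lim1}--\ref{P:lim3} together with the normally hyperbolic invariant manifold and entry--exit machinery already assembled; your proposal supplies exactly that argument (fast transient to the vertical segment, Fenichel tracking on a compact slow window with Proposition \ref{P:lim2} and the strict integral condition from \cite{demaesschalck08} controlling the excursion through the repelling band in the $V_+$ case, and Propositions \ref{P:lim1}/\ref{P:lim3} plus normal attraction along the equilibrium line for the infinite-time tail), so it is correct and follows the paper's intended route. The only slip is cosmetic: the orbit starts at $x=\delta$, not at $y=\delta$ (i.e.\ $x=1-\delta$), so the initial fast piece runs from $x=\delta$ all the way up to $x=1$; since $(S_0,I_0)\in V_\pm$ forces $I_0<\frac{k}{m_n-m_a}$ and hence $x'>0$ on all of $0<x<1$, your transient estimate applies unchanged to this longer segment.
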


\subsection{Proof of Theorem \ref{th:smalleps}}

We will only consider one type of singular orbit; the proof for other types is similar.  Let $(S_0,I_0,x_0)$ satisfy $(S_0,I_0) \in \hat{T}$,  $I_0 > \frac{k}{m_n-m_a}$, and $0<x_0<1$.  We will consider the following singular orbit $\mathcal{S}$:
\begin{enumerate}
\item Fast orbit from $(S_0,I_0,x_0)$ to $(S_0,I_0,0)$.
\item Slow orbit $\Gamma_1$ from $(S_0,I_0,0)$ to $(S_1,I_1,0)$  with $I_1 < \frac{k}{m_n-m_a}$ and $\mathcal{I}_0\big((S_0,I_0),(S_1,I_1)\big)=0$.
\item Fast orbit  from $(S_1,I_1,0)$ to $(S_1,I_1,1)$.
\item Slow orbit $\Gamma_2$ from $(S_1,I_1,1)$ to $(S_f,0,1)$.
\end{enumerate}

For a small $\delta>0$, let $E_0 = \{(S,I,x) \in P : x=\delta\}$ and $E_1 = \{(S,I,x) \in P : x=1-\delta\}$.  For a small $\epsilon>0$, let $\Gamma^\epsilon$ be the orbit of \eqref{Pf1}--\eqref{Pf3} that starts at $(S_0,I_0,x_0)$.  We break $\Gamma^\epsilon$ into parts.
\begin{enumerate}
\item $\Gamma^\epsilon_1$ from $(S_0,I_0,x_0)$ to $(\tilde S_0,\tilde I_0,\delta)\in E_0$.
\item $\Gamma^\epsilon_2$ from $(\tilde S_0,\tilde I_0,\delta)$ to the next intersection with $E_0$ at
$(\tilde S_1,\tilde I_1,\delta)$.
\item $\Gamma^\epsilon_3$ from $(\tilde S_1,\tilde I_1,\delta)$ to $(\tilde{\tilde S}_1,\tilde{\tilde I}_1,1-\delta)\in E_1$.
\item  $\Gamma^\epsilon_4$ from $(\tilde{\tilde S}_1,\tilde{\tilde I}_1,1-\delta)\in E_1$ to an equilibrium $(\tilde S_f,0,1)$
\end{enumerate}
Then, as $\epsilon\to0$:
\begin{enumerate}
\item By considering the fast system \eqref{Pf1}--\eqref{Pf3} on $\delta \le x \le x_0$, we see that $\Gamma^\epsilon_1$ converges to the line segment $[S_0,I_0,x_0),(S_0,I_0,\delta)]$.
\item From Theorem \ref{th:ee}, $\Gamma^\epsilon_2$ converges to the union of the line segment $[(S_0,I_0,\delta),(S_0,I_0,0)$, the curve $\Gamma_1$, and the line segment $[(S_1,I_1,0),(S_1,I_1,\delta)]$.
\item By considering the fast system \eqref{Pf1}--\eqref{Pf3} on $\delta \le x \le 1-\delta$, we see $\Gamma_{\epsilon3}$ converges to the line segment $[(S_1,I_1,\delta),(S_1,I_1,1-\delta)]$.
\item From Proposition \ref{P:lim4}, $\Gamma_{\epsilon4}$ converges to the union of the line segment $[(S_1,I_1,1-\delta),(S_1,I_1,0)]$ and the curve $\Gamma_2$.  Moreover, the limiting equilibrium $(S_f^\epsilon,0,1)$ converges to $(S_f,0,1)$.
\end{enumerate}

\section{Discussion\label{sec:disc}}

One mathematical issue has been left hanging.  Theorem \ref{th:smalleps} applies only to singular orbits of finite length.  I suspect that all singular orbits have finite length, but have not been able to prove it.

The model discussed in this paper could be generalized in several tantalizing directions..  One is to replace the susceptible group by several subgroups with different payoff functions.  The groups could represent, for example, those with sufficient resources to survive staying home, or with the ability to work from home, and those who need to work outside the home.   A second direction, suggested by the covid-19 pandemic, is to replace the infective group by subgroups.  There could be a group that is infected, and infective, but so far asymptomatic, so unaware of being infective.  Those in this group would continue to use the behavior they used when susceptible.  Some in this group would later become symptomatic; they would presumably change their behavior at this point.

\appendix
\section{Matlab routines}

The file findsingorbit.m is used to find a singular orbit.  Parameter values are entered in the file epimconstants.m.  The files entryexitint0.m and entryexitint1.m are used by findsingorbit.m to evaluate entry-exit integrals in $x=0$ and $x=1$ respectively.

\vspace{.3in}
\noindent\bf{epimconstants.m}
\lstinputlisting{epimconstants.m}
\vspace{.3in}
\noindent\bf{entryexitint0.m}
\lstinputlisting{entryexitint0.m}
\vspace{.3in}
\noindent\bf{entryexitint1.m}
\lstinputlisting{entryexitint1.m}
\vspace{.3in}
\noindent\bf{findsingorbit.m}
\lstinputlisting{findsingorbit.m}

\end{document}